\documentclass[conference]{IEEEtran}
\IEEEoverridecommandlockouts
\usepackage{graphicx}
\usepackage{subcaption}
\usepackage{verbatim}
\usepackage{epstopdf}
\usepackage{setspace}
\usepackage{algorithm}
\usepackage{algpseudocode}
\usepackage{amsmath}
\usepackage{amsthm}
\usepackage{mathrsfs}
\usepackage{extarrows}
\usepackage{cite}
\usepackage{bbm}
\usepackage{amssymb}
\usepackage{stfloats}
\usepackage{color, soul}
\usepackage{booktabs}
\usepackage{tikz,xcolor}
\usepackage[left=0.68in,right=0.68in,top=0.67in,bottom=0.99in]{geometry}
\usepackage[bookmarks=false, colorlinks=true, citecolor=purple]{hyperref}
\hypersetup{hypertexnames=false}


\newtheorem{theorem}{Theorem}
\newtheorem{lemma}{Lemma}

\newtheorem{corollary}{Corollary}

\graphicspath{{paper_v18_figs/}}
\begin{document}

\title{{\huge \textsc{Astra}: Asynchronous Age-Aware Satellite Random Access via Mean-Field Control}}

\author{
Sayam Chakraborty$^*$$^\dagger$, Aimin Li$^*$, Yi\u{g}it \.{I}nce$^*$, Sajjad Baghaee$^*$, Elif Uysal$^*$, \emph{Fellow, IEEE}\thanks{Detailed proofs and additional results can be found in \cite{sayam2026sat}. Aimin Li contributed equally to this work. This work was supported by the European Union (through ERC Advanced Grant 101122990-GO SPACE-ERC-2023-AdG). Yi\u{g}it \.{I}nce was also supported by Turk Telekom within the framework of the 5G and Beyond Joint Graduate Support Programme, coordinated by the Information and Communication Technologies Authority. Views and opinions expressed are those of the authors only and do not necessarily reflect those of the funding agencies.}\\

$^*$Communication Networks Research Group (CNG), EE Dept, METU, Ankara, Turkiye\\
$^\dagger$Dept of Avionics, Indian Institute of Space Science and Technology, Trivandrum, India\\

\textit{E-mail}: chakrabortysayam2@gmail.com;
\{aimin, yigit.ince, uelif\}@metu.edu.tr;
sajjad@baghaee.com
}

\maketitle

\begin{abstract}
Satellite Internet-of-Things (IoT) enables massive status-update services
beyond terrestrial coverage, but grant-free uplink access creates a coupled
freshness-control problem: increasing repetition and receiver-side diversity
improves a device's capture-SIC opportunities, yet the resulting population
congestion degrades network-wide freshness. Existing AoI-aware random-access
models often rely on slot-synchronous collisions, fixed delivery
probabilities, or scalar transmit-or-wait decisions and therefore cannot
capture asynchronous satellite uplinks with capture and SIC. This paper
develops a PHY-aware mean-field framework, termed \textsc{Astra}
(\textbf{A}synchronous Age-Aware \textbf{S}a\textbf{t}ellite \textbf{R}andom \textbf{A}ccess), for freshness-driven satellite IoT
random access. We build an access model that captures asynchronous
arrivals, partial overlaps, capture, and SIC while preserving the dependence
of delivery success on each device's repetition-diversity action. We then
formulate the population interaction as a scalable mean-field MDP in which
devices optimize access timing and intensity using only local AoI
observations. The resulting system admits a mean-field equilibrium in which
individual optimality and endogenous congestion are mutually consistent. We
further prove that the optimal equilibrium policy admits an age-threshold
structure. Numerical results
show that the proposed policy reduces AoI relative to age-independent baselines.
\end{abstract}

\begin{IEEEkeywords}
Satellite IoT, Random Access, Age of Information, Mean-Field Games
\end{IEEEkeywords}

\section{Introduction}

Satellite Internet-of-Things (IoT) is becoming a key connectivity option
for global monitoring and machine-type communication where terrestrial
infrastructure is unavailable or uneconomical
\cite{fraire2019direct,kodheli2021satcom}. Large populations of low-power
ground devices sporadically generate short status updates and access the
satellite uplink without centralized scheduling. This grant-free paradigm
avoids excessive signaling overhead, but poses a control problem: each
device must decide \emph{when} and \emph{how aggressively} to transmit
while sharing a medium whose congestion is generated endogenously by the
population's own access decisions. Since satellite IoT devices are typically
energy-constrained, aggressive replication must be balanced against both
freshness and energy expenditure.

\begin{figure}[t]
\centering
\includegraphics[width=0.9\columnwidth]{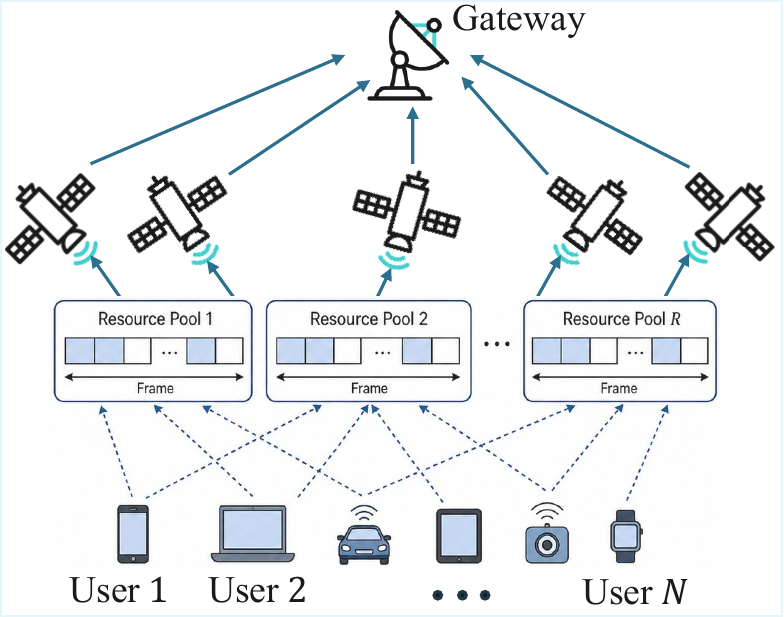}
\caption{Asynchronous satellite IoT uplink with capture-SIC. In each frame (\(T_f\), \(M\) slots), \(N\) devices transmit \(d\) replicas over \(R\) resource pools. }
\label{fig:system_model}
\vspace{-0.5cm}
\end{figure}

Classical satellite random-access designs mainly target throughput and reliability. Slotted ALOHA, Contention Resolution Diversity Slotted ALOHA (CRDSA), Irregular Repetition Slotted ALOHA (IRSA), and coded random-access schemes improve performance through packet repetition and successive interference cancellation (SIC)\cite{casini2007crdsa,liva2011irsa,paolini2015coded}. More recent work has extended this line by combining IRSA with power-domain and multi-receiver diversity. In particular, Non-Orthogonal Multiple Access (NOMA)-based IRSA uses discrete received-power levels to resolve collisions through signal-to-interference-plus-noise ratio (SINR)-based capture and SIC~\cite{shao2019noma}. Energy-efficient IRSA variants exploit per-replica power diversity to improve both spectral and energy efficiency under SIC decoding~\cite{recayte2024energy}. Multi-satellite NOMA-IRSA further shows that additional satellite receivers can reduce packet loss and improve energy efficiency by providing receiver diversity~\cite{recayte2026multisatellite}. Together, these studies highlight the importance of replica-level design, capture-SIC, and receiver diversity in satellite IoT random access. However, their design objectives including packet loss, throughput, asymptotic load thresholds, spectral efficiency, and energy efficiency remain incomplete for status-update traffic, in which even a reliably delivered packet may have limited value if it is \textit{stale}.

To capture this limitation, Age of Information (AoI) has emerged as a
freshness metric that quantifies the timeliness of the most recently
received update \cite{kaul2012realtime,yates2021aoi}. A central insight from
AoI theory is that freshness optimization differs fundamentally from delay
or throughput optimization: in some regimes, deliberate waiting can reduce
long-term age \cite{sun2017update,yates2017multiaccess}. This insight has
motivated a family of \emph{age-threshold} policies, under which a device
contends only when its local AoI exceeds a threshold~$\delta$. Atabay
et~al.\ \cite{atabay2020improving}, Chen et~al.\ \cite{chen2020adra}, and
Yavascan et~al.\ \cite{yavascan2021analysis} studied such policies; in
particular, Yavascan et~al.\ showed that threshold-based access can
substantially improve AoI scaling relative to plain slotted ALOHA.
Ahmetoglu et~al.\ \cite{ahmetoglu2022mista} further improved this scaling
via collision-sensing minislots, while Chen et~al.\ \cite{chen2022agegain}
generalized the threshold to an age-gain criterion with order-optimality
guarantees. Collectively, these results show that \emph{age-aware control}
can yield substantial freshness gains over age-agnostic policies.

SIC-aided protocols such as IRSA have also been studied from an AoI
perspective \cite{munari2021modern,grybosi2022sic}, and de~Jesus et~al.\cite{jesus2022relay} extended age-dependent random access to a two-hop
multi-relay topology. However, these works embed SIC in fixed,
frame-synchronous protocols, so repetition intensity and receiver-side
diversity are not modeled as AoI-dependent control variables. At the system
level, Zhou and Saad \cite{zhou2024mfg} formulated a mean-field game for
carrier-sense multiple access (CSMA)-based ultra-dense IoT and proved the
existence and convergence of a mean-field equilibrium for AoI-optimal
backoff rates. While their framework highlights the potential of mean-field
methods for large-scale AoI optimization, it relies on a CSMA model with
closed-form transition rates and does not extend to asynchronous
capture-SIC satellite uplinks.

Despite this progress, three limitations remain unresolved in satellite IoT:
($i$)~Most AoI-aware analyses assume slot-synchronous collision channels,
whereas satellite uplinks feature propagation-delay offsets, fractional
overlaps, fading, capture, and imperfect SIC.
($ii$)~Repetition-based AoI studies typically optimize fixed access rules
rather than AoI-dependent control over both replica count and receiver-side
resource diversity. These coupled dimensions create a nontrivial tradeoff
among reliability, congestion, and transmission effort that cannot be
captured by a scalar access probability.
($iii$)~The frame-level success probability is usually treated as exogenous,
even though it is determined endogenously by the population's access
decisions.

To address these gaps, we develop \textsc{Astra} (\textbf{A}synchronous Age-Aware \textbf{S}a\textbf{t}ellite \textbf{R}andom
\textbf{A}ccess), a mean-field Markov decision process (MDP) framework for AoI-aware
satellite IoT random access. In \textsc{Astra}, each device adapts the number
of resource pools and the number of replicas per pool based solely on its
local AoI. The main contributions are as follows:

\begin{itemize}
\item \textbf{System model.}
We develop the \textsc{Astra} random-access model, which captures key
satellite-uplink effects, including asynchronous packet arrivals, partial
overlaps, capture, and SIC. Unlike conventional AoI random-access models
that rely on idealized slot-collision abstractions or exogenously specified
delivery probabilities
\cite{munari2021modern,grybosi2022sic,atabay2020improving,chen2020adra},
our model preserves the dependence of delivery success on both a device's
access action and the aggregate population behavior.

\item \textbf{Mean-field MDP framework.}
We formulate \textsc{ASTRA} as a mean-field AoI control problem in which each
device adapts not only when to transmit but also its repetition level and
receiver-side diversity. This extends existing threshold-ALOHA schemes
\cite{yavascan2021analysis,chen2020adra,ahmetoglu2022mista}, which mainly
optimize the transmit-or-wait decision; fixed-repetition SIC schemes
\cite{munari2021modern,grybosi2022sic}, which do not adapt repetition to
information freshness; and the mean-field game formulation in
\cite{zhou2024mfg}, which controls only a single backoff rate under CSMA.

\item \textbf{Threshold structure and performance gains.}
We prove that the optimal policy admits an age-threshold structure, rather
than assuming such a structure a priori as in
\cite{yavascan2021analysis,ahmetoglu2022mista,jesus2022relay}. This result
shows that simple age-based access remains optimal for a given congestion
level even when repetition and receiver-side diversity are adapted jointly.
Simulations further show that the proposed policy reduces AoI relative to
\textit{age-independent} policies.
\end{itemize}
\section{System Model}
\label{sec:system_model}

We consider a frame-based uplink satellite IoT random-access system with \(N\) ground devices and \(R\) receiver-side resource pools, as
illustrated in Fig.~\ref{fig:system_model}. Devices sporadically generate
status updates and access the satellite link without centralized per-frame
scheduling. Each device makes one access decision per frame based on its own AoI. This grant-free and decentralized operation is well
suited to massive satellite IoT, but it creates a \textit{cross-layer} mismatch: access
decisions are made at frame boundaries, whereas packet overlap, capture
events, and successive interference cancellation (SIC) are determined by
continuous-time interactions at the satellite/gateway receiver.

\subsection{Resource Pools and Frame Structure}

We model the satellite/gateway receiver through \(R\) parallel resource pools
in each frame. A resource pool is a logical random-access pool: packets placed
in the same pool contend with one another, and the receiver produces one
pool-level decoding outcome after the corresponding satellite, beam, frequency,
or code-domain receiver processing. A pool may be implemented by a
frequency/code partition, a beam, a single-satellite observation branch, or a
gateway-side observation branch formed from observations of multiple visible
satellites. Hence, \(R\) denotes the number of logical access pools, not
necessarily the number of satellites. This abstraction allows multi-satellite
observation diversity to be represented at the pool level through the
pool-level decoding model, without requiring the access policy to select
individual satellites. 

Time is divided into frames of duration \(T_f\). Within each frame, each
resource pool is partitioned into \(M\) logical slots of duration
\begin{equation}
T_s=\frac{T_f}{M}.
\end{equation}
Hence, a frame consists of \(R\) parallel receiver-side access pools over a
common observation interval, each with its own logical slot structure. Each
transmitted replica has physical duration \(T_p\le T_s\). 

\begin{figure}[t]
\centering
\includegraphics[width=1\columnwidth]{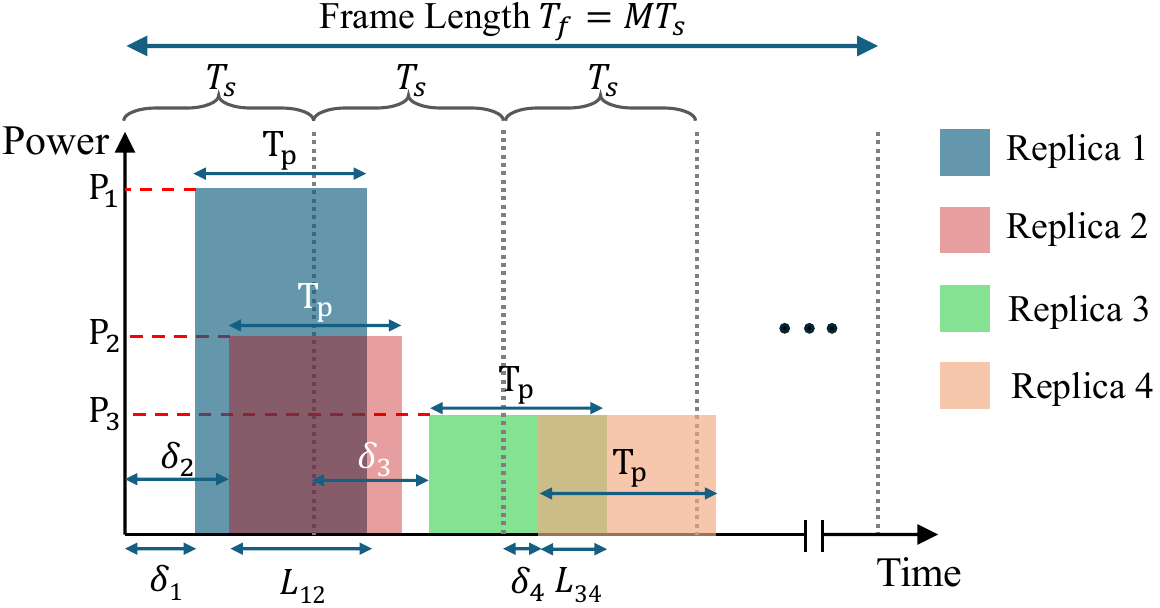}
\caption{Asynchronous packet reception and partial overlaps. Packets assigned to slots arrive with residual offsets \(\delta_i\), leading to continuous-time transmissions of length \(T_p\) that may partially overlap. The overlap duration \(L_{ij}\) determines the time-averaged interference contribution in SINR calculations.}
\label{fig:decoding}
\end{figure}

\subsection{Replica Repetition and Pool-Diversity Control}
At the beginning of each frame, a device chooses how \emph{aggressively} to
access the satellite uplink along two coupled dimensions: the number of
selected resource pools and the number of replicas transmitted in each
selected pool. We define the access action of device \(n\) in
frame \(k\) as
\begin{equation}
a_{n,k}\triangleq(d_{n,k},q_{n,k})\in\mathcal A,
\end{equation}
where \(q_{n,k}\) denotes the number of selected resource pools and
\(d_{n,k}\) denotes the number of replicas transmitted in each selected pool.
The two action components play different physical roles. The variable
\(q_{n,k}\) captures inter-pool diversity, whereas \(d_{n,k}\) captures
intra-pool repetition diversity. Increasing either one can improve update
delivery reliability, but also increases transmission cost and network
congestion \cite{casini2007crdsa,liva2011irsa}. 

We also include the idle action \((0,0)\), which allows a device to skip
transmission in a frame. Such \textit{deliberate waiting} is important in a freshness-critical model: when the current AoI is small, deferring access can be preferable to transmitting
immediately \cite{sun2017update,tang2023age,yavascan2021analysis}. Assuming there are maximum $D$ per-pool repetitions, the action space for each user is:
\begin{equation}
\mathcal A
=
\{(0,0)\}
\cup
\{(d,q): d\in\{1,\ldots,D\},\ q\in\{1,\ldots,R\}\}.
\label{eq:action_set}
\end{equation}

The transmission cost for user $n$ in frame $k$ is modeled as the total number of transmitted
replicas,
\begin{equation}
E(a_{n,k})=d_{n,k}q_{n,k}, \quad\forall n\in\{1,\cdots,N\},k\in\mathbb{N}^+.
\label{eq:energy_cost}
\end{equation}

\subsection{Asynchronous Arrival}
The frame-level access action is evaluated through an asynchronous
capture-SIC decoding process at the satellite/gateway receiver. If replica
\(v\) is assigned to logical slot \(m_v\), its receiver-side start time is
\begin{equation}
t_v=(m_v-1)T_s+\delta_v,
\end{equation}
where \(\delta_v\in[0, T_s)\) is a residual timing offset due to heterogeneous
satellite propagation delays, residual synchronization errors, and timing
uncertainty. The
corresponding packet interval at the receiver is
\begin{equation}
\mathcal I_v=[t_v,t_v+T_p).
\end{equation}
As illustrated in Fig.~\ref{fig:decoding}, the residual offsets
\(\delta_v\) shift packet arrivals away from their nominal slot boundaries.
Hence, even replicas with different nominal slot indices may still overlap
fractionally within the same pool.

\subsection{Rician Fading and SIC Decoding}
Each replica experiences a random received power due to the satellite uplink
channel. We model small-scale fading by a \textit{unit-mean Rician coefficient}
\(G_v^{\mathrm{Rician}}\). The received power of replica \(v\) is therefore
\begin{equation}
P_v=\bar P_v\,G_v^{\mathrm{Rician}},
\label{eq:rician_power}
\end{equation}
where \(\bar P_v\) denotes the nominal received power and the Rician
\(K\)-factor characterizes the relative strength of the line-of-sight
component. For two replicas
\(u\) and \(v\), their overlap length is $L_{uv}=|\mathcal{I}_u\cap\mathcal{I}_v|$. At SIC iteration \(\ell\), the time-averaged interference seen by replica
\(u\) is
\begin{equation}
I_u^{(\ell)}
=
\frac{1}{T_p}\sum_{v\neq u}\alpha_v^{(\ell)}P_vL_{uv},
\end{equation}
where \(\alpha_v^{(\ell)}\in\{1,\epsilon\}\) is the residual interference
factor. Initially, \(\alpha_v^{(0)}=1\) for all replicas; once replica \(v\) is decoded, its
residual factor is updated to \(\epsilon\). Replica
\(u\) is decodable if
\begin{equation}
\mathrm{SINR}_u^{(\ell)}
=
\frac{P_u}{\sigma^2+I_u^{(\ell)}}
\ge \gamma_{\rm th},
\end{equation}
where \(\sigma^2\) is the receiver noise power and \(\gamma_{\rm th}\) is the
capture threshold. Let
\begin{equation}
\mathcal D^{(\ell)}
=
\left\{
u:\mathrm{SINR}_u^{(\ell)}\ge\gamma_{\rm th}
\right\}
\label{eq:decodable_set}
\end{equation}
denote the set of decodable replicas at iteration \(\ell\). If
\(\mathcal D^{(\ell)}=\emptyset\), the SIC procedure stops. Otherwise, the
receiver decodes the \textit{highest-power} replica among the currently decodable ones.
In the present implementation, we adopt the strongest-first rule
\begin{equation}
u^\star\in\arg\max_{u\in\mathcal D^{(\ell)}} P_u.
\label{eq:decoding_order}
\end{equation}

A tagged device is declared successful in a frame if at least one of its
replicas is decoded in at least one selected pool after the pool-level
capture-SIC procedure and gateway-level OR fusion. The resulting frame-level
success probability is summarized in the calibrated interface introduced next.

\subsection{Success Law, AoI Dynamics, and Design Objective}
\label{subsec:aoi_objective}

The success probability of a tagged device depends on two quantities: its own
frame-level access action and the aggregate interference generated by the
remaining population. To make this dependence explicit, define the empirical
per-pool load seen by device \(n\) in frame \(k\) as
\begin{equation}
\widetilde{\Lambda}_{-n}(k)
\approx
\frac{1}{T_f}
\sum_{j\ne n}
\frac{d_{j,k}q_{j,k}}{R},
\label{eq:lambda_operational}
\end{equation}
where \(d_{j,k}q_{j,k}\) is the number of replicas transmitted by device
\(j\), and the factor \(1/R\) reflects uniform pool selection. Thus, the
tagged action \(a_{n,k}\) represents the device's own control decision, while
\(\widetilde{\Lambda}_{-n}(k)\) represents the congestion environment induced
by the other devices. The formal mean-field version of this load descriptor is
given in \eqref{eq:mf_load_mapping}.

Given a tagged action \(a=(d,q)\) and a per-pool load \(\Lambda\), the
asynchronous physical layer is summarized by the calibrated success law
\begin{equation}
\hat p(a;\Lambda)
\triangleq
\mathbb P(Y=1\mid a,\Lambda),
\label{eq:lambda_only_interface}
\end{equation}
where \(Y\in\{0,1\}\) is a generic tagged-device frame-level success indicator.
The value \(\hat p(a;\Lambda)\) is the probability that at least one tagged
replica is decoded in at least one selected pool after asynchronous
capture-SIC and gateway-level OR fusion. For device \(n\) in frame \(k\), let \(Y_n(k)\in\{0,1\}\) denote the success indicator. Under the calibrated success law,
\begin{equation}
\mathbb P\!\left(
Y_n(k)=1
\,\middle|\,
a_{n,k}=a,\widetilde{\Lambda}_{-n}(k)=\Lambda
\right)
=
\hat p(a;\Lambda).
\label{eq:success_probability}
\end{equation}

Let \(\Delta_n(k)\in\mathbb N_+\) denote the gateway-side AoI of device \(n\)
at the beginning of frame \(k\), representing the elapsed time (in frames) since its most recently accepted update. The AoI evolves as
\begin{equation}
\Delta_n(k+1)=
\begin{cases}
1, & Y_n(k)=1,\\[1mm]
\Delta_n(k)+1, & Y_n(k)=0.
\end{cases}
\label{eq:aoi_dynamics}
\end{equation}

To accommodate the intrinsic scalability of massive grant-free satellite IoT where centralized per-frame coordination is practically infeasible, we focus on purely distributed access policies. Under this paradigm, each device operates within a decoupled local perfect feedback loop, observing only its own gateway-side AoI without any knowledge of the instantaneous actions, AoI states, or slot choices of neighboring devices. Consequently, we restrict our attention to the class of \textit{symmetric stationary} AoI-dependent policies:
\begin{equation}
\pi(\Delta),\ \Delta\in\mathbb N_+,
\end{equation}
 For any given symmetric policy \(\pi\), the long-term average AoI per device is:
\begin{equation}
\bar \Delta(\pi)
=
\limsup_{T\to\infty}
\frac{1}{NT}
\sum_{k=0}^{T-1}
\sum_{n=1}^{N}
\mathbb E_\pi[\Delta_n(k)],
\label{eq:avg_aoi_objective}
\end{equation}
and the corresponding long-term average transmission cost is formulated as:
\begin{equation}
\bar E(\pi)
=
\limsup_{T\to\infty}
\frac{1}{NT}
\sum_{k=0}^{T-1}
\sum_{n=1}^{N}
\mathbb E_\pi[E(a_{n,k})].
\label{eq:avg_energy_objective}
\end{equation}
Here, the expectation $\mathbb{E}_{\pi}[\cdot]$ is taken over the joint probability measure induced by the local randomized action selections, the underlying asynchronous physical-layer randomness, and the network congestion process emerging when all devices independently execute the same policy. Noting that \(E(a)=dq\), the metric \(\bar E(\pi)\) explicitly quantifies the average number of transmitted replicas per device per frame, thereby serving as a direct analytical proxy for uplink energy consumption.

The design goal is to balance information freshness and transmission effort. Formally, this motivates the following constrained optimization problem:
\begin{align}
\min_{\pi\in\Pi}\quad
& \bar \Delta(\pi)
\label{eq:energy_constrained_problem}\\
\text{s.t.}\quad
& \bar E(\pi) = B,
\nonumber
\end{align}
where \(\Pi\) denotes the class of symmetric stationary AoI-dependent policies
and \(B\) represents the strictly enforced average replica budget. To establish tractability, we resort to the unconstrained Lagrangian scalarization:
\begin{equation}
\min_{\pi\in\Pi}
\bar \Delta(\pi)+\eta \bar E(\pi),
\qquad \eta\ge 0,
\label{eq:lagrangian_scalar_problem}
\end{equation}
where \(\eta\) controls the AoI-energy tradeoff. Larger \(\eta\) favors
conservative access and deliberate waiting, while smaller \(\eta\) favors
more aggressive update attempts through stronger repetition and broader pool
diversity.

\section{Mean-Field MDP}
\label{sec:mf_mdp}

The calibrated success law \(\hat p(a;\Lambda)\) couples each device's local
access decision with the congestion generated by the population. We first study
the representative-device MDP under a fixed load \(\Lambda\), then impose a
self-consistency condition that closes the mean-field loop.

\subsection{Representative MDP Under Fixed Load $\Lambda$}
\label{subsec:fixed_load_mdp}

Fix a per-pool congestion intensity \(\Lambda\). For computation, we use the
finite AoI state space
$
\mathcal D=\{1,\ldots,\Delta_{\max}\}.
$
The representative device observes \(\Delta\in\mathcal D\) and selects an
action \(a\in\mathcal A\). Under the calibrated success law, the transition
kernel is
\begin{equation}
P_\Lambda(\Delta'\mid \Delta,a)
=
\begin{cases}
\hat p(a;\Lambda), & \Delta'=1,\\[1mm]
1-\hat p(a;\Lambda), &
\Delta'=\min\{\Delta+1,\Delta_{\max}\}.\\[1mm]
\end{cases}
\label{eq:mf_transition_kernel}
\end{equation}
For an energy multiplier \(\eta\ge 0\), define the one-stage Lagrangian cost
\begin{equation}
c_\eta(\Delta,a)=\Delta+\eta E(a).
\label{eq:mf_stage_cost}
\end{equation}
The finite-state average-cost Bellman equation is then given by~\cite{puterman1994markov}
\begin{equation}
\rho_\eta(\Lambda)+V_\eta(\Delta;\Lambda)
=
\min_{a\in\mathcal A} Q_{\eta,\Lambda}(\Delta,a),
\label{eq:mf_bellman}
\end{equation}
where
\begin{align}
Q_{\eta,\Lambda}(\Delta,a)
\triangleq
&\Delta+\eta E(a)
+\hat p(a;\Lambda)V_\eta(1;\Lambda)
\notag\\
&+
\bigl(1-\hat p(a;\Lambda)\bigr)
V_\eta(\min\{\Delta+1,\Delta_{\max}\};\Lambda).
\label{eq:mf_q_function}
\end{align}
A fixed-load best response is any selector
\begin{equation}
\pi_{\eta,\Lambda}^{\rm br}(\Delta)
\in
\arg\min_{a\in\mathcal A}
Q_{\eta,\Lambda}(\Delta,a).
\label{eq:mf_best_response}
\end{equation}
In the implementation, \eqref{eq:mf_bellman} is solved by \textit{relative value
iteration} with reference-state normalization~\cite{puterman1994markov}.

\subsection{Mean-Field Consistency}

For a stationary policy \(\pi(\Delta)\) and population AoI
distribution \(m(\Delta)\), the induced per-pool replica start-time
intensity is
\begin{equation}
\Lambda(m,\pi)
=
\frac{N-1}{T_f}
\sum_{\Delta\in\mathcal D}
m(\Delta)
\frac{d(\pi(\Delta))q(\pi(\Delta))}{R}.
\label{eq:mf_load_mapping}
\end{equation}
In \eqref{eq:mf_load_mapping}, the term \(N-1\) removes the tagged device
from the population count. The remaining factor gives the expected number of
replicas that a device using action \(\pi(\Delta)\) injects into a generic
pool.

\begin{theorem}[Existence of a mean-field fixed point]
\label{thm:mf_fixed_point_existence}
Fix the energy multiplier \(\eta\ge 0\). Under the finite-state and
continuity assumptions stated in ~\cite[Appendix~\ref{app:proof_mf_fixed_point}]{sayam2026sat},
there exists a stationary mean-field operating point
\begin{align}
\pi_\eta^\star
&\in
\operatorname{BR}_\eta(\Lambda_\eta^\star),
\label{eq:mf_br_condition}\\
m_\eta^\star
&=
\mu(\pi_\eta^\star,\Lambda_\eta^\star),
\label{eq:mf_stationary_condition}\\
\Lambda_\eta^\star
&=
\Lambda(m_\eta^\star,\pi_\eta^\star),
\label{eq:mf_load_condition}
\end{align}
where \(\operatorname{BR}_\eta(\Lambda)\) denotes the set of stationary
optimal policies under load \(\Lambda\), and \(\mu(\pi,\Lambda)\) is the stationary distribution of the
Markov chain induced by \eqref{eq:mf_transition_kernel} under policy
\(\pi\).
\end{theorem}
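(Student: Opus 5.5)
We will prove existence by a fixed-point argument on the scalar load \(\Lambda\). Because the action space is finite and best responses may be set-valued, we work with randomized (mixed) stationary policies and the correspondence they induce. The assumptions we rely on are: \(\mathcal D\) and \(\mathcal A\) finite; \(\Lambda\mapsto\hat p(a;\Lambda)\) continuous on a compact interval \([0,\Lambda_{\max}]\) for every \(a\); and \(\hat p(a;\Lambda)>0\) for every \(a\neq(0,0)\). By \eqref{eq:mf_load_mapping}, \(\Lambda(m,\pi)\in[0,\Lambda_{\max}]\) with \(\Lambda_{\max}=(N-1)DR/(T_fR)=(N-1)D/T_f\). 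The plan is to define a correspondence \(\Phi:[0,\Lambda_{\max}]\rightrightarrows[0,\Lambda_{\max}]\) and apply Kakutani's theorem. For a given \(\Lambda\), \(\Phi(\Lambda)\) is the set of values \(\Lambda(\mu(\pi,\Lambda),\pi)\) obtained as \(\pi\) ranges over randomized policies that, at each \(\Delta\), mix only over actions in \(\arg\min_a Q_{\eta,\Lambda}(\Delta,a)\). Any fixed point \(\Lambda^\star\in\Phi(\Lambda^\star)\) then yields \(\pi^\star_\eta\) and \(m^\star_\eta\) satisfying \eqref{eq:mf_br_condition}--\eqref{eq:mf_load_condition}.

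\textbf{Step 1 (well-posedness of \(\mu\)).} The chain \eqref{eq:mf_transition_kernel} has a single recurrent class containing state \(1\), and it is aperiodic. If some state uses an action with \(\hat p>0\), state \(1\) is reachable from everywhere. If instead the policy idles forever, all mass drifts to \(\Delta_{\max}\), which is itself absorbing and unique. Either way \(\mu(\pi,\Lambda)\) is unique. It can be written explicitly as a product form: \(m(\Delta)\propto\prod_{j<\Delta}(1-\bar p_j)\), with a geometric tail correction at \(\Delta_{\max}\), where \(\bar p_j\) is the policy-averaged success probability at state \(j\). This expression is continuous in \((\pi,\Lambda)\) as long as the normalizing constant stays positive, which the tail term guarantees.

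\textbf{Step 2 (regularity of the best-response set).} In the average-cost formulation, the relative value \(V_\eta(\cdot;\Lambda)\) is unique up to an additive constant, by the unichain property of Step 1 together with \cite{puterman1994markov}. We show that \(V_\eta\) (after normalization) and \(\rho_\eta\) depend continuously on \(\Lambda\). The cleanest route is to write \(\rho_\eta(\Lambda)\) as the minimum, over the finitely many deterministic policies, of \(\sum_\Delta\mu(\pi,\Lambda)(\Delta)c_\eta(\Delta,\pi(\Delta))\); each term is continuous in \(\Lambda\) by Step 1. The \(Q\)-functions are then continuous in \(\Lambda\), and Berge's maximum theorem implies that the argmin correspondence \(\Lambda\mapsto\mathcal A^\star(\Delta;\Lambda)\) is upper hemicontinuous. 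Passing to randomized policies supported on \(\mathcal A^\star\) gives a nonempty, convex, compact-valued, upper hemicontinuous correspondence \(\Lambda\mapsto\operatorname{BR}^{\rm mix}_\eta(\Lambda)\).

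\textbf{Step 3 (convexity of \(\Phi\); main obstacle).} Upper hemicontinuity and compact values transfer to \(\Phi\) because \(\Phi\) is a continuous image of \(\operatorname{BR}^{\rm mix}_\eta\) (Step 1). Convexity is the delicate part, since the map \(\pi\mapsto\Lambda(\mu(\pi,\Lambda),\pi)\) is not linear in the mixing weights. The first approach is to exploit that \(\Phi(\Lambda)\subset\mathbb R\). The image of the convex, hence connected, set \(\operatorname{BR}^{\rm mix}_\eta(\Lambda)\) under a continuous map is a connected subset of \(\mathbb R\), that is, an interval, so \(\Phi(\Lambda)\) is convex automatically. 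With this, Kakutani's theorem (equivalently, a one-dimensional intermediate-value argument applied to the graph of \(\Phi\)) yields \(\Lambda^\star\in\Phi(\Lambda^\star)\). As a fallback, we could work in occupation-measure coordinates, where both the stationary distribution and the best-response set become linear or convex. Finally, we select \(\pi^\star_\eta\) attaining \(\Lambda^\star\) and set \(m^\star_\eta=\mu(\pi^\star_\eta,\Lambda^\star)\). If deterministic policies are required, the statement would need \(\operatorname{BR}_\eta\) to include randomized optimal policies; we will note this explicitly, since the average-cost optimality of the randomized policy follows because every action in its support attains the minimum in \eqref{eq:mf_bellman}.
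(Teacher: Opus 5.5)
Your scalar-load route is genuinely different from the paper's, and its architecture is fine. The gap is Step~1, which fails under the hypotheses you list, and Steps~2--3 inherit that failure.

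\textbf{Where it breaks.} The claim that state $1$ is reachable from everywhere once some state transmits is false: states beyond the last transmitting state drift to $\Delta_{\max}$ and never return. More seriously, uniqueness of $\mu(\pi,\Lambda)$ genuinely fails whenever some action has $\hat p(a;\Lambda)=1$, and neither the paper's assumptions nor your extra assumption $\hat p>0$ excludes this. A policy that uses such an action at some $j<\Delta_{\max}$ and idles at $\Delta_{\max}$ has two closed classes, $\{1,\ldots,j\}$ and $\{\Delta_{\max}\}$. Your product-form normalizer degenerates exactly there: after clearing the factor $1/\bar p_{\Delta_{\max}}$ in the tail, it is positive iff $\bar p_{\Delta_{\max}}>0$ or $\prod_{j<\Delta_{\max}}(1-\bar p_j)>0$. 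So $F(\pi,\Lambda)=\Lambda(\mu(\pi,\Lambda),\pi)$ is not even well defined at such policies.

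Such policies are also not excluded from $\operatorname{BR}^{\rm mix}_\eta(\Lambda)$. States $\Delta_{\max}-1$ and $\Delta_{\max}$ share the same continuation value $V(\Delta_{\max})$ under the truncated kernel. Hence a tie between idling and a full-success action at $\Delta_{\max}$ is also a tie at $\Delta_{\max}-1$. Both the upper hemicontinuity of $\Phi$ and the interval property of $\Phi(\Lambda)$ rest on continuity of $F$ on the graph of $\operatorname{BR}^{\rm mix}_\eta$, so Step~3 loses its footing.

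Step~2 has the same root cause plus a missing argument. The uniqueness of $V_\eta$ up to a constant that you cite requires \emph{every} stationary policy to be unichain, not just the one at hand. Also, writing $\rho_\eta(\Lambda)$ as a minimum of continuous functions gives continuity of $\rho_\eta$ only. Continuity of the normalized $V_\eta$, and hence of $Q_{\eta,\Lambda}$, still needs three things: a uniform bound, a convergent subsequence along $\Lambda_n\to\Lambda$ whose limit solves the ACOE at $\Lambda$, and then uniqueness.

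\textbf{Repair.} Drop $\hat p>0$, which the argument never really needs, and assume instead $\hat p(a;\Lambda)<1$ for all $a,\Lambda$. This is natural under receiver noise and Rician fading, and by continuity on the compact load interval it is then bounded away from $1$. With it:
\begin{itemize}
\item $\Delta_{\max}$ is reachable from every state under every policy, so every policy is unichain with $\Delta_{\max}$ recurrent.
\item Your normalizer is bounded below by $\prod_{j<\Delta_{\max}}(1-\bar p_j)>0$.
\item Two ACOE solutions $V_1,V_2$ with greedy policies $\pi_1,\pi_2$ satisfy $P_{\pi_1}(V_1-V_2)\le V_1-V_2\le P_{\pi_2}(V_1-V_2)$. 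This forces $V_1-V_2$ to be constant, because both recurrent classes contain $\Delta_{\max}$.
\item The uniformly positive probability of hitting $\Delta_{\max}$ within $\Delta_{\max}$ steps bounds the normalized $V_\eta$ uniformly in $\Lambda$.
\end{itemize}
With these additions your argument goes through.

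\textbf{Comparison with the paper.} The paper never touches $\mu(\pi,\Lambda)$. It applies Kakutani to $x\mapsto\mathcal O_\eta(G(x))$ over stationary state--action occupation measures. There, stationarity of $m^\star_\eta$ is a linear flow constraint and convexity of the best-response set is convexity of an LP optimal face, so multichain policies cause no difficulty in those steps; this is the fallback you mention. Your route buys several things in return:
\begin{itemize}
\item convexity comes for free from connectedness in $\mathbb R$;
\item Berge is applied over the fixed finite set $\mathcal A$ rather than the $\Lambda$-dependent polytope $\mathcal X(\Lambda)$;
\item it mirrors the numerical iteration on $\Lambda$;
\item it returns a policy that is Bellman-optimal at every state, not only on the support of $m^\star_\eta$.
\end{itemize}
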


\begin{proof}
The proof is given in ~\cite[Appendix~\ref{app:proof_mf_fixed_point}]{sayam2026sat}.
\end{proof}

Equations \eqref{eq:mf_br_condition}--\eqref{eq:mf_load_condition}
show the closed-loop nature of the problem. The policy depends on
\(\Lambda\) through the success probability, while \(\Lambda\) is induced
by the policy and the stationary AoI distribution. Theorem~\ref{thm:mf_fixed_point_existence} guarantees existence of a relaxed
mean-field operating point for the truncated model. The numerical algorithm
below searches for such a self-consistent operating point, typically returning a deterministic policy when the Bellman minimizer is unique.

\subsection{Numerical Fixed-Point Solution}

For each \(\eta\), we compute the mean-field operating point by a
nested fixed-point iteration.

\begin{enumerate}
    \item For a provisional load \(\Lambda\), solve the representative
    MDP in \eqref{eq:mf_bellman} to obtain a best-response policy.
    \item For the current AoI distribution \(m\), update the load using
    \[
    \Lambda_{\rm new}=\Lambda(m,\pi).
    \]
    A damped update is used:
    \[
    \Lambda\leftarrow (1-\beta)\Lambda+\beta\Lambda_{\rm new}.
    \]
    \item Under the resulting policy and load, compute the stationary
    AoI distribution \(\mu\), and update
    \[
    m\leftarrow (1-\alpha)m+\alpha\mu.
    \]
\end{enumerate}
The first two steps enforce load consistency, while the third step
enforces population consistency. The iteration stops when both the
load and AoI distribution residuals are below prescribed tolerances.

\subsection{Structural Properties of the Bellman Equation}
\label{subsec:structural_properties}

In this subsection, we establish three basic structural properties of the single-user Bellman equation under a fixed mean-field load \(\Lambda\): the existence of an average-cost optimality equation (ACOE), the monotonicity of the relative value function, and the threshold structure of the optimal action. These properties provide the theoretical basis for the threshold-type policies observed later in the numerical results. Unless otherwise stated, the structural results below are stated for the untruncated AoI dynamics, while \(\Delta_{\max}\) is used only in the finite-state numerical MDP.

\begin{theorem}[ACOE existence with zero-success actions allowed]
\label{thm:acoe_existence}
 There exist a scalar \(\rho_\eta\), a finite-valued relative value function \(V:\mathcal D\to\mathbb R\), and a stationary deterministic policy \(\pi^\star:\mathcal D\to\mathcal A\) such that
\begin{align}
\rho_\eta + V(\Delta)
=
\min_{a\in\mathcal A}
\Bigl[
&\,\Delta + \eta E(a)
+ \hat p(a;\Lambda)V(1)
\notag\\
&\,+ \bigl(1-\hat p(a;\Lambda)\bigr)V(\Delta+1)
\Bigr],
\qquad \Delta\ge 1.
\label{eq:acoe_main}
\end{align}
Moreover, any stationary deterministic minimizer of the right-hand side of \eqref{eq:acoe_main} is average-cost optimal.
\end{theorem}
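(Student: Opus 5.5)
We would prove Theorem~\ref{thm:acoe_existence} through the vanishing-discount approach for countable-state average-cost MDPs with unbounded cost, in the form of Sennott's conditions. The delicate point is that some actions, notably the idle action \((0,0)\), may have \(\hat p(a;\Lambda)=0\). Under such actions the AoI drifts upward and the stage cost \(\Delta\) grows without bound, so not every stationary policy is stable. The key observation is that we only need \emph{one} good policy. Let \(a^\circ\in\arg\max_{a\in\mathcal A}\hat p(a;\Lambda)\), and assume (as the calibrated success law should give for any finite \(\Lambda\)) that \(p^\circ\triangleq\hat p(a^\circ;\Lambda)>0\). We will use the constant policy \(\pi^\circ\equiv a^\circ\) as a reference. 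Under \(\pi^\circ\) the AoI is a renewal chain with geometric inter-renewal times of mean \(1/p^\circ\). It is therefore positive recurrent with finite average cost
\[
\frac{1}{p^\circ}+\eta E(a^\circ),
\]
and its expected cost to reach state \(1\) from any \(\Delta\) is finite.

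\textbf{Step 1 (discounted problem).} For \(\beta\in(0,1)\), let \(V_\beta\) be the discounted value function. The action set is finite and costs are nonnegative, so \(V_\beta\) is the minimal solution of the discounted Bellman equation, and a deterministic stationary minimizer is optimal. We have \(V_\beta(\Delta)<\infty\) because \(\pi^\circ\) yields a discounted cost of order \(\Delta/(1-\beta)\).

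\textbf{Step 2 (monotonicity and Sennott bounds).} We show by value iteration that \(V_\beta(\Delta)\) is nondecreasing in \(\Delta\). The stage cost increases in \(\Delta\), and the transition structure (reset to \(1\), or move to \(\Delta+1\)) preserves monotonicity under a common action. This gives the lower bound \(h_\beta(\Delta)\triangleq V_\beta(\Delta)-V_\beta(1)\ge 0\). For the upper bound we compare with the policy that plays \(a^\circ\) until the first reset to state \(1\). This gives
\[
h_\beta(\Delta)\le M(\Delta)\triangleq \mathbb E_{\pi^\circ}\Bigl[\textstyle\sum_{t<\tau_1}c_\eta(\Delta_t,a^\circ)\,\Big|\,\Delta_0=\Delta\Bigr].
\]
Explicitly, \(M(\Delta)=\Delta/p^\circ+(1-p^\circ)/(p^\circ)^2+\eta E(a^\circ)/p^\circ<\infty\), uniformly in \(\beta\). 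Moreover, \((1-\beta)V_\beta(1)\) is bounded by the average cost of \(\pi^\circ\). Together these are Sennott's conditions (the \(h_\beta\) bounded below, bounded above by a finite \(M\), and \(\sum_{\Delta'}P(\Delta'\mid\Delta,a)M(\Delta')<\infty\)). The last condition is trivial here because each state has only two successors.

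\textbf{Step 3 (limit and ACOE).} Take \(\beta_k\uparrow1\) and use a diagonal argument over the countable state space. Along a subsequence, \((1-\beta_k)V_{\beta_k}(1)\to\rho_\eta\) and \(h_{\beta_k}\to V\) pointwise. The action set is finite and each state has only two successors, so the minimum commutes with the limit. This gives equality in \eqref{eq:acoe_main}, not merely the usual Sennott inequality; the generic dominated-convergence step becomes exact because the sums are finite. Finiteness of \(V\) follows from \(0\le V\le M\).

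\textbf{Step 4 (optimality of minimizers).} Let \(\pi\) be any deterministic minimizer. Iterating the ACOE along \(\pi\) gives
\[
\sum_{t<T}\mathbb E[c_\eta]=T\rho_\eta+V(\Delta_0)-\mathbb E[V(\Delta_T)].
\]
Since \(V\ge0\), this yields \(\bar J(\pi)\le\rho_\eta\). Conversely, for an arbitrary policy \(\sigma\), the ACOE inequality gives
\[
\sum_{t<T}\mathbb E_\sigma[c_\eta]\ge T\rho_\eta+V(\Delta_0)-\mathbb E_\sigma[V(\Delta_T)].
\]
We need \(\liminf_T \mathbb E_\sigma[V(\Delta_T)]/T=0\) whenever \(\sigma\) has finite average cost. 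Because \(V(\Delta)\le M(\Delta)=O(\Delta)\) and the average cost dominates \(\frac{1}{T}\sum_t\mathbb E[\Delta_t]\), finite average AoI forces \(\mathbb E[\Delta_T]/T\to0\) along a subsequence. This step is the main obstacle. Zero-success actions mean some policies have infinite cost, and the growth of \(\Delta_T\) must be controlled only along policies with finite cost. If needed, we would instead appeal directly to Sennott's theorem that the limit-derived policy is optimal, and then argue that every minimizer of the ACOE shares this property via the same telescoping argument with \(V\ge0\).
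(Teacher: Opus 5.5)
Your proposal is correct and takes essentially the same route as the paper. Both proofs use the vanishing-discount method. Both get $h_\beta\ge 0$ from the value-iteration monotonicity of $V_\beta$. Both get the same uniform bound $\Delta/p+(1-p)/p^2+\eta E/p$ by playing a fixed positive-success action until the first reset. Both then extract a subsequence and pass to the limit in the normalized discounted Bellman equation, using that $\mathcal A$ is finite and each state has two successors.

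Your Step 4 goes further than the paper, which only asserts that ACOE minimizers are average-cost optimal. Your telescoping argument already settles this: $\bar\Delta$ is defined with a $\limsup$, so a subsequence $T_k$ with $\mathbb E_\sigma[\Delta_{T_k}]/T_k\to 0$ suffices together with $V\le M=O(\Delta)$. The fallback to Sennott's theorem is therefore unnecessary.
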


\begin{proof}
The proof is given in ~\cite[Appendix~B]{sayam2026sat}.
\end{proof}

\begin{corollary}[Action Dominance]
\label{cor:exclude_21}
Suppose that 
\begin{equation}
E(a_{ij})=E(a_{ji}),
\qquad 
\hat p(a_{ij};\Lambda)>\hat p(a_{ji};\Lambda).
\label{eq:exclude21_condition_main}
\end{equation}
Then action \(a_{ji}\) is dominated by action \(a_{ij}\) and cannot appear in the optimal policy.
\end{corollary}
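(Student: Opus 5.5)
The plan is to compare the two actions directly inside the ACOE of Theorem~\ref{thm:acoe_existence} and use the monotonicity of the relative value function. Fix $\Lambda$ and $\eta$. Let $(\rho_\eta,V,\pi^\star)$ be the ACOE solution given by Theorem~\ref{thm:acoe_existence}. For any state $\Delta\ge1$, the difference between the bracketed right-hand side of \eqref{eq:acoe_main} at $a_{ij}$ and at $a_{ji}$ is
\begin{align}
&Q(\Delta,a_{ij})-Q(\Delta,a_{ji}) \notag\\
&=\eta\bigl[E(a_{ij})-E(a_{ji})\bigr] \notag\\
&\quad+\bigl[\hat p(a_{ij};\Lambda)-\hat p(a_{ji};\Lambda)\bigr]\bigl(V(1)-V(\Delta+1)\bigr). \notag
\end{align}
The first term vanishes by the equal-cost hypothesis in \eqref{eq:exclude21_condition_main}. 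The success-probability gap in the second term is strictly positive. So the sign of the whole difference is the sign of $V(1)-V(\Delta+1)$.

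The key step is therefore to show that $V(1)<V(\Delta+1)$ for every $\Delta\ge1$, i.e., that $V$ is strictly increasing in the AoI. This is the monotonicity property announced at the start of Section~\ref{subsec:structural_properties}. I would prove it by a sample-path (coupling) argument.

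Start two copies of the AoI chain in states $\Delta_1<\Delta_2$. Let the copy started at $\Delta_2$ follow an optimal policy, and let the copy started at $\Delta_1$ mimic its actions. Couple the success events through the same uniform random variables. Under this coupling the two copies succeed at the same frames. Their AoIs therefore stay ordered with gap $\Delta_2-\Delta_1$ until the first success and coincide afterwards.

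Hence the cost advantage of the lower start is at least the difference in the first-stage cost, $\Delta_2-\Delta_1>0$, which gives $V(\Delta_1)<V(\Delta_2)$. To turn this into a statement about $V$ itself, I would carry the argument out on the discounted value function $V_\beta$ and pass to the vanishing-discount limit used to build the ACOE solution. The bound $V_\beta(\Delta_2)-V_\beta(\Delta_1)\ge \Delta_2-\Delta_1$ survives the limit, so strictness is preserved.

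\textbf{Main obstacle.} Transferring strict monotonicity to the relative value function is the hardest step, because zero-success actions are allowed: the idle action has $\hat p=0$, so the first success may never occur. The coupling still gives a lower bound of at least the first-period difference, so the obstacle is technical rather than substantive. I would first try to read it off from the construction in Theorem~\ref{thm:acoe_existence}. Failing that, I would argue via the bound $V_\beta(\Delta_2)-V_\beta(\Delta_1)\ge\Delta_2-\Delta_1$, uniformly in $\beta$.

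With $V(1)<V(\Delta+1)$ established, $Q(\Delta,a_{ij})<Q(\Delta,a_{ji})$ holds for every $\Delta$. Thus $a_{ji}$ never attains the minimum in \eqref{eq:acoe_main}, and no stationary deterministic minimizer selects it.

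For the stronger claim that no average-cost optimal policy uses $a_{ji}$, I would add one more step. Any such policy must satisfy the ACOE with equality on the recurrent states of its own chain. Using the strict inequality on those states, this excludes $a_{ji}$ wherever it would matter.
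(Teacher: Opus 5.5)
Your proof is correct. Its skeleton matches the paper's: both compare the two actions inside the ACOE. The paper writes the action-dependent part as $g_a(h)=\eta E(a)-\hat p(a;\Lambda)h(\Delta)$ with $h(\Delta)=V(\Delta+1)-V(1)$. Equal energy then reduces $g_{a_{ji}}-g_{a_{ij}}$ to $\bigl(\hat p(a_{ij};\Lambda)-\hat p(a_{ji};\Lambda)\bigr)h(\Delta)$, which is your $Q$-difference with the opposite sign.

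The real difference is the monotonicity input.

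\begin{itemize}
\item \textbf{The paper's route.} It invokes Lemma~\ref{lem:H_monotone}, whose value-iteration proof gives only that $H$ is nondecreasing, hence $h(\Delta)\ge 0$. It then concludes $g_{a_{ij}}\le g_{a_{ji}}$, strict ``whenever $h(\Delta)>0$''. It never checks that $h(\Delta)>0$, so as written it proves only weak dominance. That is not enough to rule out $a_{ji}$ appearing in an optimal policy.
\item \textbf{Your route.} The coupling gives $V_\beta(\Delta_2)-V_\beta(\Delta_1)\ge \Delta_2-\Delta_1$ uniformly in $\beta$. In the vanishing-discount limit this yields $h(\Delta)\ge\Delta\ge 1$ at every state. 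That is exactly the strictness needed for $a_{ji}$ never to attain the minimum, so you close a step the paper leaves open.
\end{itemize}

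The same bound can also be obtained by sharpening the paper's operator argument. Since $F_a(\Delta+1)-F_a(\Delta)\ge 1$ for every action, every value-iteration iterate has increments of at least one. Your sample-path argument reaches the bound without going through the Bellman operator.

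Two small technical points. The mimicking copy uses a history-dependent policy, which is admissible. If attainment of the discounted infimum is a concern, let the upper copy use an $\varepsilon$-optimal policy and send $\varepsilon\to 0$.

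Your last step, about arbitrary average-optimal policies, goes beyond what the paper proves. Strict inequality at a state with positive stationary mass forces a strictly larger gain. On the untruncated chain this needs a short integrability check of $V$ under the stationary law. It is a sound way to make ``cannot appear in the optimal policy'' precise.
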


\begin{proof}
The proof is given in ~\cite[Appendix~E]{sayam2026sat}.
\end{proof}

\begin{theorem}[Threshold structure of the optimal policy]
\label{thm:threshold_policy}
Fix the mean-field load \(\Lambda\) and consider the Bellman equation
\eqref{eq:mf_bellman}. Define
\begin{equation}
h(\Delta)\triangleq V(\Delta+1)-V(1).
\label{eq:h_def_main}
\end{equation}
Then the optimal action satisfies
\begin{equation}
a^\star(\Delta)
\in
\arg\min_{a\in\mathcal A}
\left\{
\eta E(a)-\hat p(a;\Lambda)h(\Delta)
\right\}.
\label{eq:argmin_h_main}
\end{equation}
Assume that the non-dominated effective actions
\[
\mathcal A_{\rm eff}(\Lambda)=\{a_0,a_1,\ldots,a_K\}
\]
can be ordered so that
\[
E(a_0)<E(a_1)<\cdots<E(a_K),
\]
and
\begin{equation}
\hat p(a_0;\Lambda)
<
\hat p(a_1;\Lambda)
<
\cdots
<
\hat p(a_K;\Lambda).
\label{eq:ordered_actions_main}
\end{equation}
Then the optimal policy is of threshold type in the AoI state \(\Delta\).
\end{theorem}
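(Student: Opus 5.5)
The plan has three steps: reduce the Bellman minimization to the form \eqref{eq:argmin_h_main}, show that $h(\Delta)$ is nondecreasing in $\Delta$, and then run a single-crossing argument over the ordered effective actions.

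\textbf{Step 1: reduction to \eqref{eq:argmin_h_main}.} Start from the ACOE of Theorem~\ref{thm:acoe_existence}, which is stated for the untruncated dynamics. Subtract the action-independent quantity $\Delta + V(\Delta+1)$ from the right-hand side. Since
\[
\hat p V(1) + (1-\hat p)V(\Delta+1) - V(\Delta+1) = -\hat p\,h(\Delta),
\]
the minimization becomes exactly \eqref{eq:argmin_h_main}. By Corollary~\ref{cor:exclude_21}, and more generally by discarding any action with weakly larger cost and weakly smaller success probability than another, the minimization can be restricted to $\mathcal A_{\rm eff}(\Lambda)$ without loss.

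\textbf{Step 2: monotonicity of $h$ (the main obstacle).} I would show that $V$ is nondecreasing, so that $h(\Delta)$ is nondecreasing in $\Delta$ and $h\ge 0$.

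\begin{itemize}
\item \emph{Relative value iteration.} Take $V_0\equiv 0$ and
\[
V_{t+1}(\Delta)=\min_a\bigl[\Delta+\eta E(a)+\hat p(a;\Lambda)V_t(1)+(1-\hat p(a;\Lambda))V_t(\Delta+1)\bigr].
\]
\item \emph{Induction.} If $V_t$ is nondecreasing, then for each fixed $a$ the bracket is nondecreasing in $\Delta$: the term $\Delta$ increases, and $V_t(\Delta+1)$ enters with nonnegative weight. A pointwise minimum of nondecreasing functions is nondecreasing, so $V_{t+1}$ is nondecreasing.
\item \emph{Passing to the limit.} Use the discounted vanishing-discount construction that presumably underlies Theorem~\ref{thm:acoe_existence} (Appendix~B of \cite{sayam2026sat}). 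The discounted value $V_\beta$ is nondecreasing by the same induction, so $V_\beta(\cdot)-V_\beta(1)$ is nondecreasing. Monotonicity then survives the pointwise limit to $V$.
\end{itemize}

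This is the delicate part. The state space is countable and zero-success actions are allowed, so the limit argument has to reuse whatever bounds Appendix~B supplies. The truncated chain at $\Delta_{\max}$ keeps monotonicity trivially. If needed, I would also note that strict monotonicity follows from the strictly increasing term $\Delta$ together with $V(\Delta+1)\ge V(\Delta)$.

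\textbf{Step 3: threshold structure.} Let
\[
f_k(h)=\eta E(a_k)-\hat p(a_k;\Lambda)\,h,
\]
which are lines in $h$ with strictly decreasing slopes $-\hat p(a_k)$ along the ordering.

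\begin{itemize}
\item For $j>k$, the difference $f_j(h)-f_k(h)=\eta(E_j-E_k)-(p_j-p_k)h$ is strictly decreasing in $h$. It is positive at $h=0$ when $\eta>0$.
\item Hence the lower envelope $\min_k f_k(h)$ is attained by indices that are nondecreasing in $h$. This is the standard single-crossing / Topkis property: for $h'>h$, if $a_j$ is optimal at $h$, then no $a_k$ with $k<j$ can be strictly better at $h'$.
\item With ties broken toward the largest index, $k^\star(h)$ is nondecreasing in $h$.
\end{itemize}

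Composing with Step 2, $k^\star(h(\Delta))$ is nondecreasing in $\Delta$. So there are thresholds $\Delta_1\le\cdots\le\Delta_K$ such that $a_k$ is chosen on $[\Delta_k,\Delta_{k+1})$. In particular, the idle action $a_0=(0,0)$ is used exactly for small ages, which gives the age-threshold structure.

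**Remaining cases.**
- When $\eta=0$, every $h\ge 0$ makes $a_K$ optimal, which is a degenerate threshold policy.
- For the truncated model, the $\min\{\Delta+1,\Delta_{\max}\}$ dynamics preserve the argument unchanged.
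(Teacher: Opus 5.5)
Your proposal is correct and follows essentially the same route as the paper. The paper also rewrites the ACOE as the argmin of $\eta E(a)-\hat p(a;\Lambda)h(\Delta)$. It then gets monotonicity of $h$ from Lemma~\ref{lem:H_monotone}, which it proves exactly as you suggest: the discounted Bellman operator preserves monotonicity, value iteration is started from $W_0\equiv 0$, and the result is passed to the vanishing-discount limit. Finally, it uses pairwise single-crossing of the affine functions $g_a(h)$ to conclude that the minimizer moves only toward higher-energy, higher-success actions as $\Delta$ grows. Your tie-breaking toward the largest index, the $\eta=0$ case, and the use of $h\ge 0$ to justify restricting to $\mathcal A_{\rm eff}(\Lambda)$ are small refinements of points the paper leaves implicit.
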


\begin{proof}
The proof is given in ~\cite[Appendix~D]{sayam2026sat}.
\end{proof}

\section{Numerical Results}
\label{sec:numerical_results}
We now evaluate \textsc{Astra}, the proposed mean-field MDP
framework. The numerical results are designed to illustrate three aspects of
\textsc{Astra}: the calibrated success interface, the AoI-energy tradeoff
induced by the mean-field policy, and the threshold structure of the
resulting optimal actions.
\subsection{Simulation Setup}

We consider \(N=30\) devices, \(R=3\) resource pools, frame duration
\(T_f=1\), capture threshold \(\gamma_{\rm th}=2\) and AoI truncation level \(\Delta_{\max}=200\). The 
success interface is calibrated offline using the asynchronous packet-level
simulator. In the considered configuration, each frame contains \(M=3\)
logical slots, and each packet has duration \(T_p=0.25\). The lookup table
is computed over the load grid $\Lambda \in \{0,2,4,\ldots,25\}$ and over the action set in \eqref{eq:action_set}. For each table entry, the
success probability is estimated by Monte Carlo simulation under Rician
fading, additive noise, and capture-SIC decoding.

To characterize the AoI-energy tradeoff, we sweep the energy weight \(\eta\) over a logarithmic grid and solve the associated mean-field fixed point for each value of \(\eta\).
\subsection{Calibrated Success Interface}

Fig.~\ref{fig:lam} shows the calibrated success probability \(\hat p(a;\Lambda)\) as a function of the per-pool congestion intensity \(\Lambda\). As expected, the success probability decreases as the aggregate load increases. The decay is action-dependent: actions with stronger repetition or broader pool usage may provide higher reliability at low or moderate congestion, but they also become more vulnerable as the per-pool replica-start intensity grows. This behavior is precisely why the lambda interface is useful for ASTRA: it captures the physical tradeoff between reliability gain and congestion-induced interference.

\begin{figure}[t]
\centering
\includegraphics[width=0.8\columnwidth]{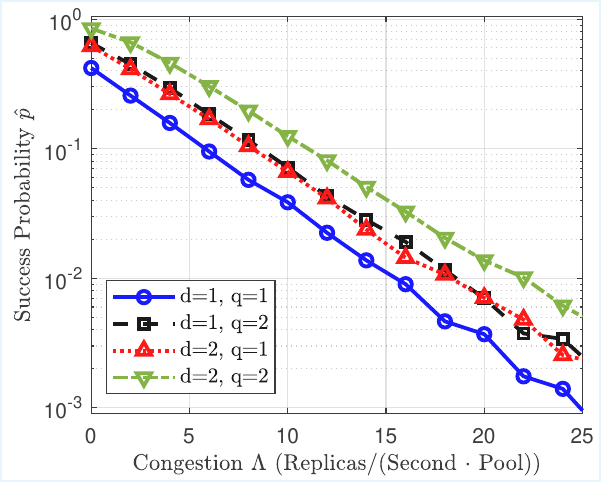}
\caption{Success probability versus per-pool congestion intensity $\Lambda$ for all transmission actions $(d,q)$. Noise variance is $\sigma^2=0.5$.}
\label{fig:lam}
\end{figure}

\begin{figure}[t]
\centering
\captionsetup[subfigure]{labelformat=parens}
\begin{subfigure}[t]{0.8\columnwidth}
    \centering
    \phantomsubcaption\label{fig:pareto_s05}
    \begin{tikzpicture}
        \node[anchor=north west,inner sep=0] (img) {\includegraphics[width=\linewidth]{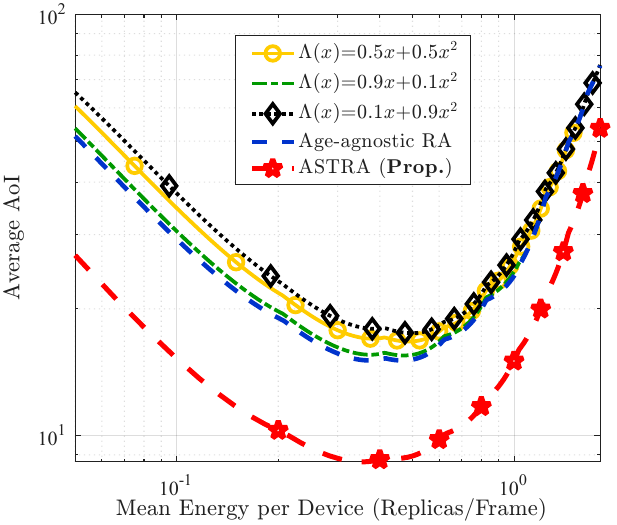}};
        \node[anchor=north east,font=\footnotesize,xshift=-6pt,yshift=-6pt] at (img.north west) {(a)};
    \end{tikzpicture}
\end{subfigure}

\vspace{6pt}

\begin{subfigure}[t]{0.8\columnwidth}
    \centering
    \phantomsubcaption\label{fig:pareto_s1}
    \begin{tikzpicture}
        \node[anchor=north west,inner sep=0] (img) {\includegraphics[width=\linewidth]{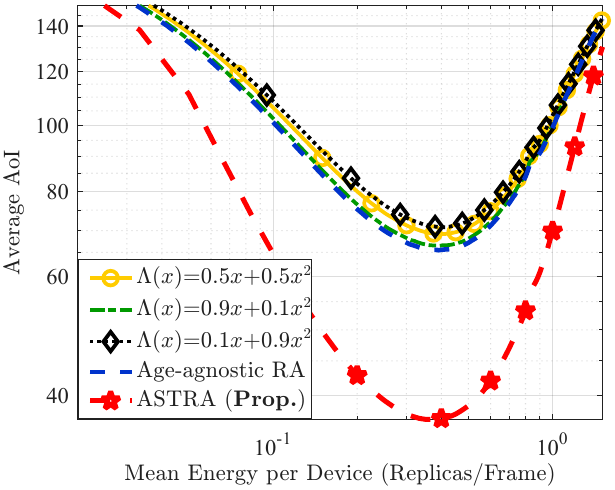}};
        \node[anchor=north east,font=\footnotesize,xshift=-6pt,yshift=-6pt] at (img.north west) {(b)};
    \end{tikzpicture}
\end{subfigure}

\caption{Average AoI versus mean energy per device for the proposed \textsc{ASTRA} scheme and baseline policies.  
Noise variance is (a)~$\sigma^2 = 0.5$ and (b)~$\sigma^2 = 1$.}
\label{fig:pareto_noise}
\end{figure}

\subsection{AoI-Energy Tradeoff}

We compare \textsc{ASTRA} with two age-independent baselines.

\begin{itemize}
\item \textbf{IRSA-inspired Baseline:} Each active device transmits replicas according to a prescribed replica-degree distribution. In our implementation, we consider three fixed degree distributions over one-replica and two-replica transmissions, namely
\[
\Lambda_{\text{IRSA}}(x) = \alpha x + (1-\alpha) x^2,
\]
with $\alpha \in \{0.5,\,0.1,\,0.9\}$. Pool selection is fixed to $q=1$. To make the comparison energy-consistent, each fixed distribution is mixed with the idle action $(0,0)$, so that the resulting average replica budget matches the target energy level ~\cite[Appendix~\ref{app:irsa_energy}]{sayam2026sat}. These baselines capture standard IRSA-inspired randomized repetition schemes under the same success-probability approximation used for our system, but using AoI-independent decisions.

\item \textbf{Age-agnostic Random Access Baseline:} All devices use the same
stationary randomized policy that is independent of AoI. Specifically, each
device selects action \(a_i\in\mathcal A\) with probability \(r_i\), regardless
of its current AoI. For each average-energy level, the common mixing vector
\(\mathbf r\) is optimized through the linear program in 
~\cite[Appendix~\ref{app:randomized_benchmark}]{sayam2026sat} to maximize the resulting average
success probability.
\end{itemize}

Fig.~\ref{fig:pareto_noise} reports the resulting AoI--energy tradeoff.
The red curve shows the computed \textsc{ASTRA} operating points obtained by sweeping the energy multiplier. The IRSA-inspired baselines are shown as individual
markers, while the dotted blue curve gives the optimized age-agnostic
randomized baseline. \textsc{ASTRA} achieves a much lower average AoI over the
plotted energy range, especially in the low-energy regime. This gain comes
from using energy selectively in stale AoI states, rather than spending
transmissions independently of freshness.

\subsection{Optimal Policies Under Representative Energy Budgets}

Fig.~\ref{fig:policy_staircase} shows the deterministic policies for two energy budgets. Under the tighter budget Fig.~\ref{fig:policy_staircase}(a), the policy stays conservative over most AoI states and switches to higher-energy actions only when AoI becomes large, since the energy penalty \(\eta E(a)\) dominates. Under the larger budget Fig.~\ref{fig:policy_staircase}(b), switching thresholds shift leftward, activating stronger actions at smaller AoI values because the effective energy penalty is weaker. Both policies exhibit a clear threshold structure: conservative actions at low AoI, switching to higher-energy actions as AoI grows, consistent with Theorem~\ref{thm:threshold_policy}. The absence of action \(a_{21}\) agrees with Corollary~\ref{cor:exclude_21}.

\begin{figure}[t]
\centering
\includegraphics[width=\columnwidth]{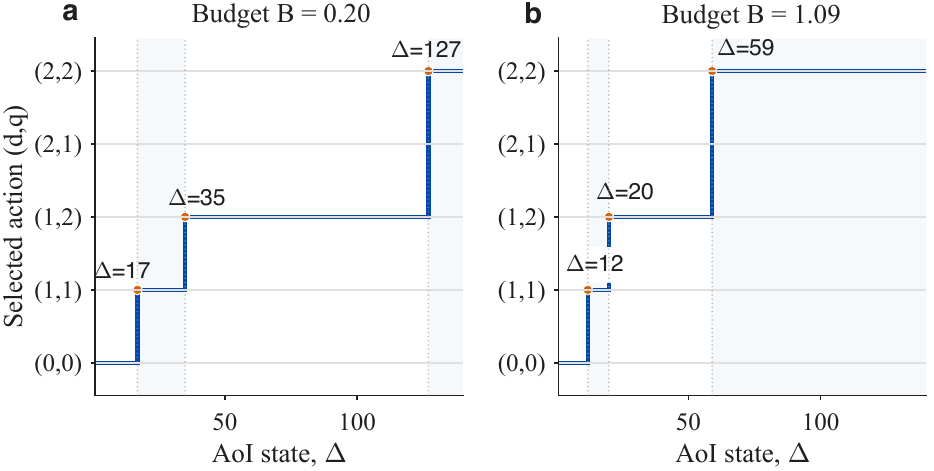}
\caption{Threshold structures of the equilibrium policies under two representative energy budgets.}
\label{fig:policy_staircase}
\end{figure}

\section{Conclusion}

This paper developed \textsc{Astra}, a mean-field MDP framework for AoI-aware satellite IoT random access under asynchronous capture-SIC decoding. The physical layer is summarized by a calibrated success interface \(\hat p(a;\Lambda)\), allowing a tractable frame-level control model. Each device adapts its repetition and pool-diversity action using only its local AoI, with population congestion determined self-consistently. Since this is a novel AoI-dependent asynchronous random-access formulation, we compared the \textsc{Astra} policy with AoI-independent baselines evaluated under the same physical model. The numerical results show that \textsc{Astra} improves the AoI-energy tradeoff by using conservative actions at small AoI and switching to more aggressive actions when updates become stale, which is consistent with the threshold structure derived from the Bellman equation.

\appendices
\section{Proof of Theorem~\ref{thm:mf_fixed_point_existence}}
\label{app:proof_mf_fixed_point}

We use the following standard assumptions for the finite-state mean-field MDP.
First, the truncated AoI state space
\[
\mathcal D=\{1,\ldots,\Delta_{\max}\}
\]
and the action set \(\mathcal A\) are finite. Second, for each action
\(a\in\mathcal A\), the calibrated success interface
\(\hat p(a;\Lambda)\) is continuous in \(\Lambda\) over a compact interval
\([0,\Lambda_{\max}]\). Third, the load interval contains all feasible
population-induced loads, i.e.,
\[
0\le \Lambda(m,\pi)\le \Lambda_{\max}
\]
for all stationary policies \(\pi\) and distributions \(m\). These
conditions hold for the truncated numerical model when the lookup table is
interpolated continuously and \(\Lambda_{\max}\) is chosen large enough to
cover the maximum per-pool replica intensity.

\begin{proof}
We prove the result using a stationary occupation-measure formulation. For a
fixed load \(\Lambda\), define the transition kernel
\[
P_\Lambda(\Delta'\mid \Delta,a)
\]
according to the AoI dynamics and the calibrated success probability
\(\hat p(a;\Lambda)\). Let \(x(\Delta,a)\) denote a stationary state-action
occupation measure over \(\mathcal D\times\mathcal A\). For fixed \(\Lambda\),
the feasible occupation-measure set is
\begin{equation}
\begin{gathered}
\mathcal X(\Lambda)=\{x\ge 0:\; C_0(x)=1,\;
C_{\Delta'}(x)=0,\; \forall\Delta'\in\mathcal D\},\\[-0.5mm]
C_0(x)\triangleq
\sum_{\Delta\in\mathcal D}\sum_{a\in\mathcal A}x(\Delta,a),\\[-0.5mm]
C_{\Delta'}(x)\triangleq
\sum_{a\in\mathcal A}x(\Delta',a)
-\sum_{\Delta\in\mathcal D}\sum_{a\in\mathcal A}
x(\Delta,a)P_\Lambda(\Delta'\mid\Delta,a).
\end{gathered}
\label{eq:occ_feasible_set}
\end{equation}
Because the state and action spaces are finite, \(\mathcal X(\Lambda)\) is
nonempty, compact, and convex. Nonemptiness follows from the existence of a
stationary distribution for every finite Markov chain induced by a stationary policy.

For fixed \(\Lambda\), the representative average-cost MDP with multiplier
\(\eta\) can be written as the linear program
\begin{equation}
\min_{x\in\mathcal X(\Lambda)}
\sum_{\Delta\in\mathcal D}
\sum_{a\in\mathcal A}
x(\Delta,a)\bigl(\Delta+\eta E(a)\bigr).
\label{eq:occ_lp}
\end{equation}
Let $\mathcal{O}_\eta(\Lambda)$ denote the set of optimal solutions of \eqref{eq:occ_lp}. Since the feasible
set is compact and the objective is linear, \(\mathcal O_\eta(\Lambda)\) is
nonempty, compact, and convex.

Next define the load induced by an occupation measure as
\begin{equation}
G(x)
=
\frac{N-1}{T_f}
\sum_{\Delta\in\mathcal D}
\sum_{a\in\mathcal A}
x(\Delta,a)\frac{d(a)q(a)}{R}.
\label{eq:occ_load}
\end{equation}
This map is linear and hence continuous. Now define the set-valued map
\[
\Gamma(x)=\mathcal O_\eta(G(x)).
\]
The domain is the compact convex probability simplex over
\(\mathcal D\times\mathcal A\). Moreover, \(\Gamma(x)\) is nonempty, compact,
and convex for every \(x\). Because \(P_\Lambda\) is continuous in \(\Lambda\),
the feasible occupation-measure correspondence \(\mathcal X(\Lambda)\) is
closed-graph, and by Berge's maximum theorem the optimal-solution
correspondence \(\mathcal O_\eta(\Lambda)\) is upper hemicontinuous. Since
\(G(x)\) is continuous, \(\Gamma\) is also upper hemicontinuous.

Therefore, by fixed-point theorem, there exists an occupation
measure \(x_\eta^\star\) such that
\[
x_\eta^\star\in \Gamma(x_\eta^\star)
=
\mathcal O_\eta(G(x_\eta^\star)).
\]
Set
\[
\Lambda_\eta^\star=G(x_\eta^\star),
\qquad
m_\eta^\star(\Delta)=\sum_{a\in\mathcal A}x_\eta^\star(\Delta,a).
\]
For every state with \(m_\eta^\star(\Delta)>0\), define
\begin{equation}
\pi_\eta^\star(a\mid\Delta)
=
\frac{x_\eta^\star(\Delta,a)}
{\sum_{b\in\mathcal A}x_\eta^\star(\Delta,b)}.
\label{eq:recover_policy}
\end{equation}
For states with \(m_\eta^\star(\Delta)=0\), choose any distribution over
\(\mathcal A\). By construction, \(x_\eta^\star\) is optimal for the
representative MDP under load \(\Lambda_\eta^\star\), so
\(\pi_\eta^\star\in\operatorname{BR}_\eta(\Lambda_\eta^\star)\). The flow
constraints in \eqref{eq:occ_feasible_set} imply that
\(m_\eta^\star\) is stationary under \((\pi_\eta^\star,\Lambda_\eta^\star)\).
Finally, \eqref{eq:occ_load} gives
\[
\Lambda_\eta^\star=\Lambda(m_\eta^\star,\pi_\eta^\star).
\]
Thus \((m_\eta^\star,\pi_\eta^\star,\Lambda_\eta^\star)\) satisfies
\eqref{eq:mf_br_condition}--\eqref{eq:mf_load_condition}, proving the existence of a
stationary mean-field fixed point.
\end{proof}
\section{Proof of Theorem~\ref{thm:acoe_existence}}
\label{app:proof_acoe}

\subsection{Useful Lemma}
Define the normalized relative value gap
\begin{equation}
H(\Delta)\triangleq V(\Delta)-V(1).
\label{eq:H_def_main}
\end{equation}
\begin{lemma}[Monotonicity of the relative value gap]
\label{lem:H_monotone}
The relative value gap \(H(\Delta)\) is nondecreasing in \(\Delta\), i.e.,
\begin{equation}
H(\Delta+1)\ge H(\Delta), \qquad \forall \Delta\ge 1.
\label{eq:H_monotone_main}
\end{equation}
\end{lemma}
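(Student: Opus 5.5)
The plan is to argue by value iteration: first prove monotonicity for the finite-horizon value functions, then pass to the limit. Fix \(\Lambda\) and define the discounted (or finite-horizon) Bellman recursion
\[
V_{t+1}(\Delta)=\min_{a\in\mathcal A}\Bigl[\Delta+\eta E(a)+\hat p(a;\Lambda)V_t(1)+\bigl(1-\hat p(a;\Lambda)\bigr)V_t(\Delta+1)\Bigr],
\qquad V_0\equiv 0.
\]
I will show by induction on \(t\) that each \(V_t\) is nondecreasing in \(\Delta\). The same argument will be run for the \(\beta\)-discounted recursion \(V_{\beta,t}\).

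\textbf{Induction step.} Assume \(V_t\) is nondecreasing. Fix any action \(a\) and compare the bracketed term at \(\Delta+1\) with that at \(\Delta\). The difference is
\[
1+\bigl(1-\hat p(a;\Lambda)\bigr)\bigl[V_t(\Delta+2)-V_t(\Delta+1)\bigr]\ \ge\ 1>0,
\]
because \(1-\hat p\in[0,1]\) and the bracket is nonnegative by the induction hypothesis. The reset term \(\hat p(a;\Lambda)V_t(1)\) and the energy term \(\eta E(a)\) do not depend on \(\Delta\), so they cancel. Hence the Q-function is pointwise nondecreasing in \(\Delta\) for every action, and so is its minimum over the finite set \(\mathcal A\). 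Zero-success actions such as the idle action \((0,0)\) with \(\hat p=0\) cause no trouble, since they only make the coefficient equal to one. The same step also gives the stronger bound \(V_{t+1}(\Delta+1)-V_{t+1}(\Delta)\ge 1\).

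\textbf{Limit step.} Monotonicity of \(V_{\beta,t}\) passes to the discounted value \(V_\beta=\lim_t V_{\beta,t}\) by pointwise convergence. Then \(H_\beta(\Delta)=V_\beta(\Delta)-V_\beta(1)\) is nondecreasing. The relative value function \(V\) in the ACOE (Theorem~\ref{thm:acoe_existence}) is obtained, along a subsequence \(\beta\to1\), as a pointwise limit of \(H_\beta\) (vanishing-discount approach), or equivalently as the limit of relative value iteration in the truncated model. Weak inequalities are preserved under pointwise limits, so \(H(\Delta+1)\ge H(\Delta)\) follows.

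\textbf{Main obstacle.} The hard step is justifying that the particular \(V\) in the ACOE is such a limit. On the countable state space, the ACOE solution need not be unique, and the finiteness of \(H_\beta\) has to be controlled.

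\begin{itemize}
\item \emph{Boundedness of \(H_\beta\).} My first attempt is to bound it from above via a reference policy that always uses the action with maximal \(\hat p\), giving a geometric return time to state \(1\) when \(\max_a\hat p(a;\Lambda)>0\). The lower bound \(H_\beta\ge0\) follows from the monotonicity just shown.
\item \emph{Degenerate load.} If \(\max_a \hat p=0\), the chain never resets and \(V\) is trivially increasing through the cost \(\Delta\). In that case monotonicity can be checked directly from \eqref{eq:acoe_main}, since \(V(\Delta+1)-V(\Delta)=\rho_\eta-\Delta+\ldots\)
\item \emph{Fallback.} Alternatively, I can argue on the truncated chain, where the min at \(\Delta_{\max}\) self-loops, and let \(\Delta_{\max}\to\infty\). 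The induction above works verbatim there, because \(\min\{\Delta+1,\Delta_{\max}\}\) is nondecreasing in \(\Delta\).
\end{itemize}
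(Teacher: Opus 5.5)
Your proposal is correct and is essentially the paper's proof: the per-action increment \(1+\alpha(1-\hat p(a;\Lambda))\bigl(W(\Delta+2)-W(\Delta+1)\bigr)\ge 0\) shows the discounted Bellman operator preserves monotonicity, so value iteration from \(W_0\equiv 0\) makes \(V_\alpha\) (hence \(H_\alpha\)) nondecreasing, and this passes to the vanishing-discount subsequence limit that defines \(H\) in Theorem~\ref{thm:acoe_existence}. One aside needs correcting: if \(\hat p(a;\Lambda)=0\) for every \(a\), the ACOE gives \(V(\Delta+1)-V(\Delta)=\rho_\eta-\Delta\), which is eventually negative, and \(H_\alpha(\Delta)=(\Delta-1)/(1-\alpha)\) has no finite limit, so that case cannot be ``checked directly'' and must be excluded, as the paper implicitly does by taking \(\hat p(\bar a;\Lambda)>0\) in the bound used to construct \(H\).
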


\begin{proof}
The proof is given in Appendix~C.
\end{proof}

\subsection{Formal Proof}
\begin{proof}
For \(\alpha\in(0,1)\), define the discounted value function
\begin{equation}
V_\alpha(\Delta)
=
\inf_{\pi}
\mathbb E_\Delta^\pi
\left[
\sum_{t=0}^{\infty}
\alpha^t\bigl(\Delta(t)+\eta E(a(t))\bigr)
\right].
\label{eq:discounted_value}
\end{equation}
It satisfies the discounted Bellman equation
\begin{align}
V_\alpha(\Delta)
&=
\min_{a\in\mathcal A}
\Bigl[
\Delta + \eta E(a)
+ \alpha \hat p(a;\Lambda)V_\alpha(1)
\notag\\
&\qquad
+ \alpha\bigl(1-\hat p(a;\Lambda)\bigr)V_\alpha(\Delta+1)
\Bigr].
\label{eq:discounted_bellman}
\end{align}

Consider the constant policy that always applies \(\bar a\) until the first reset to state \(1\). Let \(T\) be the first reset time. Then \(T\) is geometrically distributed with parameter \(\hat p(\bar a;\Lambda)\), so
\begin{equation}
\mathbb E[T]=\frac{1}{\hat p(\bar a;\Lambda)},
\qquad
\mathbb E\!\left[\sum_{t=0}^{T-1} t\right]
=
\frac{1-\hat p(\bar a;\Lambda)}{\hat p(\bar a;\Lambda)^2}.
\label{eq:geom_bounds}
\end{equation}
Using this admissible policy, we obtain the bound
\begin{align}
0\le
H_\alpha(\Delta)
&\triangleq V_\alpha(\Delta)-V_\alpha(1)
\notag\\
&\le
\frac{\Delta+\eta E(\bar a)}{\hat p(\bar a;\Lambda)}
+
\frac{1-\hat p(\bar a;\Lambda)}{\hat p(\bar a;\Lambda)^2},
\label{eq:uniform_local_bound}
\end{align}
which is finite for every fixed \(\Delta\) and uniform in \(\alpha\).

Hence, for each fixed \(\Delta\), the family \(\{H_\alpha(\Delta)\}_{\alpha\in(0,1)}\) is bounded. Along a sequence \(\alpha_n\uparrow 1\), we may extract a pointwise limit
\begin{equation}
H_{\alpha_n}(\Delta)\to H(\Delta),
\qquad \forall \Delta\ge 1,
\label{eq:H_limit}
\end{equation}
and
\begin{equation}
(1-\alpha_n)V_{\alpha_n}(1)\to \rho_\eta.
\label{eq:rho_limit}
\end{equation}

Subtracting \(V_\alpha(1)\) from both sides of \eqref{eq:discounted_bellman} yields
\begin{align}
(1-\alpha)V_\alpha(1) + H_\alpha(\Delta)
&=
\min_{a\in\mathcal A}
\Bigl[
\Delta + \eta E(a) \notag\\
&\qquad + \alpha\bigl(1-\hat p(a;\Lambda)\bigr)H_\alpha(\Delta+1)
\Bigr].
\label{eq:normalized_discounted_bellman}
\end{align}
Letting \(\alpha_n\uparrow 1\) in \eqref{eq:normalized_discounted_bellman},
and using \eqref{eq:H_limit} together with \eqref{eq:rho_limit}, we obtain
\begin{align}
\rho_\eta + H(\Delta)
=
\min_{a\in\mathcal A}
\Bigl[
&\,\Delta + \eta E(a)
\notag\\
&\,+ (1-\hat p(a;\Lambda))H(\Delta+1)
\Bigr].
\label{eq:H_acoe}
\end{align}
Since \(H(\Delta)=V(\Delta)-V(1)\), this is equivalent to
\eqref{eq:acoe_main}. Because \(\mathcal A\) is finite, the minimizer of the right-hand side can be chosen as a deterministic function of \(\Delta\). Any such stationary deterministic minimizer is average-cost optimal.
\end{proof}

\section{Proof of Lemma~\ref{lem:H_monotone}}
\label{app:proof_monotonicity}

\begin{proof}
For \(\alpha\in(0,1)\), define the discounted Bellman operator
\begin{align}
(T_\alpha W)(\Delta)
=
\min_{a\in\mathcal A}
\Bigl[
&\,\Delta + \eta E(a)
+ \alpha \hat p(a;\Lambda) W(1)
\notag\\
&\,+ \alpha \bigl(1-\hat p(a;\Lambda)\bigr)W(\Delta+1)
\Bigr].
\label{eq:discount_operator}
\end{align}
Suppose \(W(\Delta)\) is nondecreasing in \(\Delta\). For any fixed action \(a\), define
\begin{align}
F_a(\Delta)
=
&\,\Delta + \eta E(a)
+ \alpha \hat p(a;\Lambda) W(1)
\notag\\
&\,+ \alpha \bigl(1-\hat p(a;\Lambda)\bigr)W(\Delta+1).
\end{align}
Then
\begin{align}
F_a(\Delta+1)-F_a(\Delta)
&=
1 + \alpha\bigl(1-\hat p(a;\Lambda)\bigr) \notag\\
&\qquad\times \bigl(W(\Delta+2)-W(\Delta+1)\bigr) \\
&\ge 0,
\end{align}
so \(F_a(\Delta)\) is nondecreasing for every action \(a\). Therefore, \(T_\alpha W\) is also nondecreasing.

Starting value iteration from the constant function \(W_0(\Delta)\equiv 0\), all iterates
\begin{equation}
W_{n+1}=T_\alpha W_n
\end{equation}
are nondecreasing. Since discounted value iteration converges to \(V_\alpha\), the discounted value function \(V_\alpha(\Delta)\) is nondecreasing. Hence
\begin{equation}
H_\alpha(\Delta)\triangleq V_\alpha(\Delta)-V_\alpha(1)
\end{equation}
is also nondecreasing.

From Theorem~\ref{thm:acoe_existence}, along a sequence \(\alpha_n\uparrow 1\),
\begin{equation}
H_{\alpha_n}(\Delta)\to H(\Delta),
\qquad \forall \Delta\ge 1.
\end{equation}
Since each \(H_{\alpha_n}\) is nondecreasing and pointwise limits preserve monotonicity, \(H(\Delta)\) is nondecreasing. This proves
\begin{equation}
H(\Delta+1)\ge H(\Delta),
\qquad \forall \Delta\ge 1.
\end{equation}
\end{proof}

\section{Proof of Theorem~\ref{thm:threshold_policy}}
\label{app:proof_threshold}

\begin{proof}
From \eqref{eq:acoe_main}, separate the action-independent terms:
\begin{align}
\rho_\eta + V(\Delta)
&=
\Delta + V(\Delta+1)
\notag\\
&\quad + \min_{a\in\mathcal A}
\Bigl[
\eta E(a)
\notag\\
&\hspace{4.4em}-\hat p(a;\Lambda)\bigl(V(\Delta+1)-V(1)\bigr)
\Bigr].
\label{eq:bellman_rewrite_appendix}
\end{align}
Using \eqref{eq:h_def_main}, we obtain
\begin{align}
\rho_\eta + V(\Delta)
&=
\Delta + V(\Delta+1)
\notag\\
&\quad + \min_{a\in\mathcal A}
\Bigl[
\eta E(a) - \hat p(a;\Lambda)h(\Delta)
\Bigr],
\end{align}
which yields \eqref{eq:argmin_h_main}.

Now define
\begin{equation}
g_a(h)\triangleq \eta E(a)-\hat p(a;\Lambda)h.
\label{eq:g_def_appendix}
\end{equation}
For two effective actions \(a_i\) and \(a_j\) with \(i<j\), \eqref{eq:ordered_actions_main} gives
\begin{equation}
E(a_i)<E(a_j),
\qquad
\hat p(a_i;\Lambda)<\hat p(a_j;\Lambda).
\end{equation}
Hence
\begin{align}
g_{a_j}(h)-g_{a_i}(h)
=
&\,\eta\bigl(E(a_j)-E(a_i)\bigr)
\notag\\
&\,-\bigl(\hat p(a_j;\Lambda)-\hat p(a_i;\Lambda)\bigr)h.
\label{eq:pairwise_diff_appendix}
\end{align}
The right-hand side is a strictly decreasing affine function of \(h\). Therefore the two action costs cross at most once, at
\begin{equation}
H_{ij}(\Lambda)
=
\frac{\eta\bigl(E(a_j)-E(a_i)\bigr)}
{\hat p(a_j;\Lambda)-\hat p(a_i;\Lambda)}.
\label{eq:pairwise_threshold_appendix}
\end{equation}
Equivalently,
\begin{equation}
g_{a_j}(h)\le g_{a_i}(h)
\quad\Longleftrightarrow\quad
h\ge H_{ij}(\Lambda).
\end{equation}
Thus, when \(h\) is small, the lower-energy action is preferred, while for sufficiently large \(h\), the higher-success action is preferred. This is the single-crossing property.

By Lemma~\ref{lem:H_monotone}, \(V(\Delta)\) is nondecreasing in \(\Delta\), so
\begin{equation}
h(\Delta)=V(\Delta+1)-V(1)
\end{equation}
is nondecreasing in \(\Delta\). Therefore, as \(\Delta\) increases, \(h(\Delta)\) crosses the pairwise thresholds \(H_{ij}(\Lambda)\) in order, and the minimizing action can only move from lower-energy/lower-success actions to higher-energy/higher-success actions. Hence the optimal policy is of threshold type in the AoI state.
\end{proof}

\section{Proof of Corollary~\ref{cor:exclude_21}}
\label{app:proof_exclude21}

\begin{proof}
Recall the effective action objective
\begin{equation}
g_a\bigl(h(\Delta)\bigr)
=
\eta E(a)-\hat p(a;\Lambda)h(\Delta).
\end{equation}
Using \eqref{eq:exclude21_condition_main}, we obtain
\begin{align}
g_{a_{21}}\bigl(h(\Delta)\bigr)-g_{a_{12}}\bigl(h(\Delta)\bigr)
=
&\,\eta\bigl(E(a_{21})-E(a_{12})\bigr)
\notag\\
&\,+ \bigl(\hat p(a_{12};\Lambda)-\hat p(a_{21};\Lambda)\bigr)h(\Delta)
\notag\\
=
&\,\bigl(\hat p(a_{12};\Lambda)-\hat p(a_{21};\Lambda)\bigr)h(\Delta)
\ge 0.
\end{align}
Hence
\begin{equation}
g_{a_{12}}\bigl(h(\Delta)\bigr)\le g_{a_{21}}\bigl(h(\Delta)\bigr),
\qquad \forall\,\Delta.
\end{equation}
Moreover, the inequality is strict whenever \(h(\Delta)>0\). Therefore, action \(a_{21}\) is dominated by action \(a_{12}\) and cannot be selected by the optimal policy.
\end{proof}

\section{Age-Independent Randomized Baseline}
\label{app:randomized_benchmark}

This appendix describes the age-independent randomized baseline used in Fig.~\ref{fig:pareto_noise}. Consider a policy that chooses action \(a_i\in\mathcal A\) with probability \(r_i\), independently of the AoI state. Let
\[
\mathbf r=(r_1,\ldots,r_K)
\]
denote the action-mixing vector, where \(K=|\mathcal A|\). The average energy of this policy is
\begin{equation}
\bar E(\mathbf r)
=
\sum_{i=1}^{K} r_i E(a_i).
\label{eq:random_avg_energy}
\end{equation}
Under the lambda approximation, if the baseline is evaluated at average energy \(c\), the induced per-pool load is
\begin{equation}
\Lambda(c)
=
\frac{N-1}{T_f}\frac{c}{R}.
\label{eq:random_lambda}
\end{equation}
For fixed \(c\), the average success probability of the randomized policy is
\begin{equation}
\bar p(\mathbf r;c)
=
\sum_{i=1}^{K}
r_i \hat p(a_i;\Lambda(c)).
\label{eq:random_avg_success}
\end{equation}

The best age-independent randomized policy at energy level \(c\) is obtained from the linear program
\begin{align}
\bar p^\star(c)
=
\max_{\mathbf r}\quad
&\sum_{i=1}^{K}
r_i \hat p(a_i;\Lambda(c))
\label{eq:random_lp}\\
\text{s.t.}\quad
&\sum_{i=1}^{K} r_i E(a_i)=c,
\nonumber\\
&\sum_{i=1}^{K} r_i=1,
\qquad
r_i\ge 0,\quad i=1,\ldots,K.
\nonumber
\end{align}
The corresponding age-independent randomized baseline is
\begin{equation}
\bar \Delta_{\rm rand}(c)
=
\frac{1}{\bar p^\star(c)}.
\label{eq:random_aoi_benchmark}
\end{equation}
This expression follows from the geometric AoI law induced by a state-independent Bernoulli success process. The baseline is optimal only within the restricted class of AoI-independent randomized policies. Therefore, it is not a lower bound on the performance of AoI-dependent policies.

\section{Energy Normalization for IRSA-inspired Baselines}
\label{app:irsa_energy}

This appendix describes how the energy budget is computed for the IRSA-inspired baselines used in the numerical comparison. The
purpose is to ensure that the IRSA baselines and the proposed policy are
compared under the same average replica budget.

We consider three prescribed IRSA-type replica-degree distributions over
one-replica and two-replica transmissions:
\begin{equation}
\Lambda_{\alpha}(x)
=
\alpha x + (1-\alpha)x^2,
\qquad
\alpha\in\{0.5,0.1,0.9\}.
\label{eq:irsa_degree_distribution}
\end{equation}
Equivalently, an active device selects degree \(d=1\) with probability
\(\alpha\) and degree \(d=2\) with probability \(1-\alpha\). In these
IRSA baselines, pool diversity is not used and the number of selected pools is
fixed as \(q=1\). Therefore, the transmission action is either \((1,1)\) or
\((2,1)\), and the per-frame transmission cost is
\begin{equation}
E(d,q)=dq.
\label{eq:irsa_action_energy}
\end{equation}
The mean number of replicas transmitted by an active IRSA device is then
\begin{equation}
\bar d_{\alpha}
=
\alpha\cdot 1 + (1-\alpha)\cdot 2
=
2-\alpha.
\label{eq:irsa_active_mean_degree}
\end{equation}
Since the proposed system allows the idle action \((0,0)\), we match a target
average energy budget \(B\) by mixing the fixed IRSA transmission rule with
the idle action. Let \(\theta_{\alpha}(B)\) denote the probability that a device
is active in a frame under the IRSA baseline. To achieve average energy \(B\),
we set
\begin{equation}
\theta_{\alpha}(B)
=
\frac{B}{\bar d_{\alpha}}
=
\frac{B}{2-\alpha}.
\label{eq:irsa_activity_probability}
\end{equation}
Thus, the complete action distribution of the IRSA-inspired baseline is
\begin{align}
r_{0,0}^{(\alpha)}(B)
&=
1-\theta_{\alpha}(B), \label{eq:irsa_idle_prob}\\
r_{1,1}^{(\alpha)}(B)
&=
\theta_{\alpha}(B)\alpha, \label{eq:irsa_degree1_prob}\\
r_{2,1}^{(\alpha)}(B)
&=
\theta_{\alpha}(B)(1-\alpha),
\label{eq:irsa_degree2_prob}
\end{align}
with all other action probabilities equal to zero. By construction, the
resulting average energy is
\begin{align}
\bar E_{\mathrm{IRSA}}^{(\alpha)}(B)
&=
r_{1,1}^{(\alpha)}(B)E(1,1)
+
r_{2,1}^{(\alpha)}(B)E(2,1) \nonumber\\
&=
\theta_{\alpha}(B)
\left[
\alpha\cdot 1 + (1-\alpha)\cdot 2
\right] \nonumber\\
&=
\theta_{\alpha}(B)(2-\alpha)
=
B.
\label{eq:irsa_energy_matching}
\end{align}
Hence, the IRSA baseline is energy-matched to the proposed policy at the same
average replica budget.

In our mean-field load approximation, the average replica budget \(B\) and the
per-pool load \(\Lambda\) are related by
\begin{equation}
\Lambda
=
\frac{N-1}{R T_f}B,
\label{eq:irsa_lambda_from_B}
\end{equation}
or equivalently,
\begin{equation}
B
=
\frac{R T_f}{N-1}\Lambda.
\label{eq:irsa_B_from_lambda}
\end{equation}
When the Monte Carlo success-probability table is indexed by a discrete load
variable \(G\), we identify \(G\) with \(\Lambda\) and use
\begin{equation}
B(G)
=
\frac{R T_f}{N-1}G.
\label{eq:irsa_B_from_G}
\end{equation}
For a given operating point \(G\), the activity probability in
\eqref{eq:irsa_activity_probability} is therefore computed as
\begin{equation}
\theta_{\alpha}(G)
=
\frac{B(G)}{2-\alpha}
=
\frac{R T_f G}{(N-1)(2-\alpha)}.
\label{eq:irsa_theta_from_G}
\end{equation}

Let \(\hat p((d,q);\Lambda)\) denote the calibrated frame-level success
probability of a tagged device using action \((d,q)\) under per-pool load
\(\Lambda\). Under the above IRSA action distribution, the average success
probability of the IRSA-inspired baseline is
\begin{align}
p_{\mathrm{IRSA}}^{(\alpha)}(B)
&=
r_{1,1}^{(\alpha)}(B)
\hat p((1,1);\Lambda_B)
+
r_{2,1}^{(\alpha)}(B)
\hat p((2,1);\Lambda_B) \nonumber\\
&=
\theta_{\alpha}(B)
\left[
\alpha \hat p((1,1);\Lambda_B)
+
(1-\alpha)\hat p((2,1);\Lambda_B)
\right],
\label{eq:irsa_success_probability}
\end{align}
where
\begin{equation}
\Lambda_B=\frac{N-1}{R T_f}B.
\label{eq:irsa_lambda_B}
\end{equation}
The idle action contributes zero successful updates and is therefore omitted
from \eqref{eq:irsa_success_probability}.

Finally, under the Bernoulli frame-level success approximation, the AoI process
of this age-agnostic IRSA baseline is a geometric reset process:
\begin{equation}
\Delta(k+1)
=
\begin{cases}
1, & \text{with probability } p_{\mathrm{IRSA}}^{(\alpha)}(B),\\
\Delta(k)+1, & \text{with probability } 1-p_{\mathrm{IRSA}}^{(\alpha)}(B).
\end{cases}
\label{eq:irsa_aoi_dynamics}
\end{equation}
Thus, the corresponding average AoI is computed as
\begin{equation}
\bar\Delta_{\mathrm{IRSA}}^{(\alpha)}(B)
=
\frac{1}{p_{\mathrm{IRSA}}^{(\alpha)}(B)}.
\label{eq:irsa_average_aoi}
\end{equation}

The construction above is feasible when
\begin{equation}
0\leq \theta_{\alpha}(B)\leq 1,
\quad \text{or equivalently} \quad
0\leq B\leq 2-\alpha.
\label{eq:irsa_feasibility}
\end{equation}
Operating points outside this range cannot be matched exactly by mixing the
fixed IRSA degree distribution with the idle action alone, and are therefore
excluded from the IRSA-inspired curve.

\begin{thebibliography}{10}
\providecommand{\url}[1]{#1}
\csname url@samestyle\endcsname
\providecommand{\newblock}{\relax}
\providecommand{\bibinfo}[2]{#2}
\providecommand{\BIBentrySTDinterwordspacing}{\spaceskip=0pt\relax}
\providecommand{\BIBentryALTinterwordstretchfactor}{4}
\providecommand{\BIBentryALTinterwordspacing}{\spaceskip=\fontdimen2\font plus
\BIBentryALTinterwordstretchfactor\fontdimen3\font minus \fontdimen4\font\relax}
\providecommand{\BIBforeignlanguage}[2]{{%
\expandafter\ifx\csname l@#1\endcsname\relax
\typeout{** WARNING: IEEEtran.bst: No hyphenation pattern has been}%
\typeout{** loaded for the language `#1'. Using the pattern for}%
\typeout{** the default language instead.}%
\else
\language=\csname l@#1\endcsname
\fi
#2}}
\providecommand{\BIBdecl}{\relax}
\BIBdecl

\bibitem{sayam2026sat}
S.~Chakraborty, A.~Li, Y.~{\.I}nce, S.~Baghaee, and E.~Uysal, ``\textsc{Astra}: Asynchronous age-aware satellite random access via mean-field control,'' arXiv preprint, 2026.

\bibitem{fraire2019direct}
J.~A. Fraire, S.~C{\'e}spedes, and N.~Accettura, ``Direct-to-satellite {IoT} -- a survey of the state of the art and future research perspectives,'' in \emph{Proc. Int. Conf. Ad-Hoc, Mobile, Wireless Netw. (ADHOC-NOW)}, ser. LNCS, vol. 11604.\hskip 1em plus 0.5em minus 0.4em\relax Springer, 2019, pp. 241--258.

\bibitem{kodheli2021satcom}
O.~Kodheli \emph{et~al.}, ``Satellite communications in the new space era: A survey and future challenges,'' \emph{IEEE Commun. Surveys Tuts.}, vol.~23, no.~1, pp. 70--109, 2021.

\bibitem{casini2007crdsa}
E.~Casini, R.~D. Gaudenzi, and O.~del Rio~Herrero, ``Contention resolution diversity slotted {ALOHA} ({CRDSA}): An enhanced random access scheme for satellite access packet networks,'' \emph{IEEE Trans. Wireless Commun.}, vol.~6, no.~4, pp. 1408--1419, Apr. 2007.

\bibitem{liva2011irsa}
G.~Liva, ``Graph-based analysis and optimization of contention resolution diversity slotted {ALOHA},'' \emph{IEEE Trans. Commun.}, vol.~59, no.~2, pp. 477--487, Feb. 2011.

\bibitem{paolini2015coded}
E.~Paolini, G.~Liva, and M.~Chiani, ``Coded slotted {ALOHA}: A graph-based method for uncoordinated multiple access,'' \emph{IEEE Trans. Inf. Theory}, vol.~61, no.~12, pp. 6815--6832, Dec. 2015.

\bibitem{shao2019noma}
X.~Shao, Z.~Sun, M.~Yang, S.~Gu, and Q.~Guo, ``{NOMA}-based irregular repetition slotted {ALOHA} for satellite networks,'' \emph{IEEE Commun. Letters}, 2019.

\bibitem{recayte2024energy}
E.~Recayte, T.~Devaja, and D.~Vukobratovic, ``Energy-efficient irregular repetition slotted {ALOHA} for {IoT} satellite systems,'' in \emph{Proc. IEEE Int. Conf. on Commun. Workshops (ICC Workshops)}, 2024.

\bibitem{recayte2026multisatellite}
E.~Recayte and C.~Amatetti, ``Multi-satellite {NOMA}-irregular repetition slotted {ALOHA} for {IoT} networks,'' \emph{arXiv preprint arXiv:2601.00341}, 2026.

\bibitem{kaul2012realtime}
S.~Kaul, R.~Yates, and M.~Gruteser, ``Real-time status: How often should one update?'' in \emph{Proc. IEEE INFOCOM}, 2012, pp. 2731--2735.

\bibitem{yates2021aoi}
R.~D. Yates, Y.~Sun, D.~R. Brown, S.~K. Kaul, E.~Modiano, and S.~Ulukus, ``Age of information: An introduction and survey,'' \emph{IEEE J. Sel. Areas Commun.}, vol.~39, no.~5, pp. 1183--1210, May 2021.

\bibitem{sun2017update}
Y.~Sun, E.~Uysal-Biyikoglu, R.~D. Yates, C.~E. Koksal, and N.~B. Shroff, ``Update or wait: How to keep your data fresh,'' \emph{IEEE Trans. Inf. Theory}, vol.~63, no.~11, pp. 7492--7508, Nov. 2017.

\bibitem{yates2017multiaccess}
R.~D. Yates and S.~K. Kaul, ``Status updates over unreliable multiaccess channels,'' in \emph{Proc. IEEE Int. Symp. Inf. Theory (ISIT)}, 2017, pp. 331--335.

\bibitem{atabay2020improving}
D.~C. Atabay, E.~Uysal, and O.~Kaya, ``Improving age of information in random access channels,'' in \emph{Proc. IEEE INFOCOM Workshops}, 2020, pp. 912--917.

\bibitem{chen2020adra}
H.~Chen, Y.~Gu, and S.-C. Liew, ``Age-of-information dependent random access for massive {IoT} networks,'' in \emph{Proc. IEEE INFOCOM Workshops}, 2020, pp. 930--935.

\bibitem{yavascan2021analysis}
O.~T. Yavascan and E.~Uysal, ``Analysis of slotted {ALOHA} with an age threshold,'' \emph{IEEE J. Sel. Areas Commun.}, vol.~39, no.~5, pp. 1456--1470, May 2021.

\bibitem{ahmetoglu2022mista}
M.~Ahmetoglu, O.~T. Yavascan, and E.~Uysal, ``{MiSTA}: An age-optimized slotted {ALOHA} protocol,'' \emph{IEEE Internet Things J.}, vol.~9, no.~17, pp. 15\,484--15\,496, Sep. 2022.

\bibitem{chen2022agegain}
X.~Chen, K.~Gatsis, H.~Hassani, and S.~S. Bidokhti, ``Age of information in random access channels,'' \emph{IEEE Trans. Inf. Theory}, vol.~68, no.~10, pp. 6548--6568, Oct. 2022.

\bibitem{munari2021modern}
A.~Munari, ``Modern random access: An age of information perspective on irregular repetition slotted {ALOHA},'' \emph{IEEE Trans. Commun.}, vol.~69, no.~6, pp. 3572--3585, Jun. 2021.

\bibitem{grybosi2022sic}
J.~F. Grybosi, J.~L. Rebelatto, and G.~L. Moritz, ``Age of information of {SIC}-aided massive {IoT} networks with random access,'' \emph{IEEE Internet Things J.}, vol.~9, no.~1, pp. 662--670, Jan. 2022.

\bibitem{jesus2022relay}
G.~G.~M. de~Jesus, J.~L. Rebelatto, and R.~D. Souza, ``Age-of-information dependent random access in multiple-relay slotted {ALOHA},'' \emph{IEEE Access}, vol.~10, pp. 112\,076--112\,085, 2022.

\bibitem{zhou2024mfg}
B.~Zhou and W.~Saad, ``Age of information in ultra-dense {IoT} systems: Performance and mean-field game analysis,'' \emph{IEEE Trans. Mobile Comput.}, vol.~23, no.~5, pp. 4533--4547, May 2024.

\bibitem{tang2023age}
H.~Tang, Y.~Chen, J.~Wang, P.~Yang, and L.~Tassiulas, ``Age optimal sampling under unknown delay statistics,'' \emph{IEEE Trans. Inf. Theory}, vol.~69, no.~2, pp. 1295--1314, Feb. 2023.

\bibitem{puterman1994markov}
M.~L. Puterman, \emph{Markov Decision Processes: Discrete Stochastic Dynamic Programming}.\hskip 1em plus 0.5em minus 0.4em\relax John Wiley \& Sons, 1994.

\end{thebibliography}
\end{document}